\newtheorem{theorem}{Theorem}
\newtheorem{lemma}[theorem]{Lemma}
\newtheorem{definition}[theorem]{Definition}
\newtheorem{example}[theorem]{Example}
\newtheorem{corollary}[theorem]{Corollary}
\newtheorem{proposition}[theorem]{Proposition}
\newtheorem{assumption}[theorem]{Assumption}
\newtheorem{remark}[theorem]{Remark}
\newcommand{\hmin}{h_{\min}}
\newcommand{\hmax}{h_{\max}}
\newcommand{\CH}[1]{{\color{black} #1}}
\newcommand{\Note}[1]{{\color{black} #1}}
\newcommand{\expect}[1]{\mathbb{E}\left[  {#1} \right]}
\newcommand{\Dt}{\Delta t}
\newcommand\bc{\begin{center}}
\newcommand\ec{\end{center}}
\newcommand\bi{\begin{itemize}}
\newcommand\ei{\end{itemize}}
\newcommand\be{\begin{enumerate}}
\newcommand\ee{\end{enumerate}}
\newcommand\bd{\begin{definition}}
\newcommand\ed{\end{definition}}
\newcommand\bt{\begin{theorem}}
\newcommand\et{\end{theorem}}
\newcommand\bp{\begin{proposition}}
\newcommand\ep{\end{proposition}}
\newcommand\bcor{\begin{corollary}}
\newcommand\ecor{\end{corollary}}
\newcommand\bx{\begin{exercise}}
\newcommand\ex{\end{exercise}}
\newcommand\beg{\begin{example}}
\newcommand\eeg{\end{example}}
\newcommand\bl{\begin{lemma}}
\newcommand\el{\end{lemma}}
\newcommand\bea{\begin{eqnarray*}}
\newcommand\eea{\end{eqnarray*}}
\newcommand{\Cp}[1]{\mathbf{\upsilon}_{#1}}
\begin{document}



\title[An adaptive method for CIR]{The role of adaptivity in a numerical method for the Cox-Ingersoll-Ross model}

\author[C. Kelly]{C\'onall Kelly$^\ast$}
\address{School of Mathematical Sciences\\ University College Cork\\ Republic of Ireland}
\address{Email: conall.kelly@ucc.ie}

\author[G. J. Lord]{Gabriel J. Lord}
\address{Department of Mathematics\\
Radboud University\\
Netherlands}
\address{Email: gabriel.lord@ru.nl}

\author[H. Maulana]{Heru Maulana}
\address{School of Mathematical Sciences\\ University College Cork\\ Republic of Ireland}
\address{Postal address: Mathematics Department\\ Faculty of Mathematics and Natural Sciences\\ Universitas Negeri Padang\\ Republic of Indonesia}
\address{Email: herumaulana@fmipa.unp.ac.id}

\thanks{$^\ast$Corresponding author.}

\subjclass{60H10, 60H35, 65C30, 91G30, 91G60}
\keywords{Cox-Ingersoll-Ross model; Adaptive timestepping; Explicit Euler-Maruyama method; Strong convergence; Positivity}

\date{\today}

\maketitle
\begin{abstract}
We demonstrate the effectiveness of an adaptive explicit Euler method for the approximate solution of the Cox-Ingersoll-Ross model. This relies on a class of path-bounded timestepping strategies which work by reducing the stepsize as solutions approach a neighbourhood of zero. The method is hybrid in the sense that a convergent backstop method is invoked if the timestep becomes too small, or to prevent solutions from overshooting zero and becoming negative. Under parameter constraints that imply Feller's condition, we prove that such a scheme is strongly convergent, of order at least $1/2$. Control of the strong error is important for multi-level Monte Carlo techniques. Under Feller's condition we also prove that the probability of ever needing the backstop method to prevent a negative value can be made arbitrarily small. Numerically, we compare this adaptive method to fixed step implicit and explicit schemes, and a novel semi-implicit adaptive variant. We observe that the adaptive approach leads to methods that are competitive in a domain that extends beyond Feller's condition, indicating suitability for the modelling of stochastic volatility in Heston-type asset models. 
\end{abstract}



\section{Introduction}

The Cox-Ingersoll-Ross (CIR) process, used for example in the pricing of interest rate derivatives and as a model of stochastic volatility, is given by the following It\^o-type stochastic differential equation (SDE), 
\begin{equation}\label{eq:equ1}
dX(t)= \kappa \left( \lambda - X(t) \right) dt+ \sigma \sqrt{X(t)} dW(t), \ t \in [0,T];\ X(0)=X_0>0,
\end{equation}
where $W(t)$ is a Wiener Process, $\kappa$, $\lambda$, and $\sigma$ are positive parameters, and $T\in[0,\bar T]$ for some fixed $\bar T<\infty$. Solutions of \eqref{eq:equ1} are almost surely (a.s.) non-negative: in general they can achieve a value of zero but will be reflected back into the positive half of the real line immediately. Moreover, if  $2\kappa\lambda\geq \sigma^2$, referred to as the Feller condition, solutions will be a.s. positive. No closed form solution of \eqref{eq:equ1} is available, though  $X(t)$ has (conditional upon $X(s)$ for $0\leq s<t$) a non-central chi-square distribution with $\lim_{t\to\infty}\mathbb{E}[X(t)]=\lambda$ and  $\lim_{t\to\infty}$Var$[X(t)]=\lambda\sigma^2/2\kappa$; see \cite{5}.  

For Monte Carlo estimation, exact sampling from the known conditional distribution of $X(t)$ is possible but computationally inefficient and potentially restrictive if the Wiener process of \eqref{eq:equ1} is correlated with that of another process: see \cite{Alphonsi2005,3,8,15}. Consequently a substantial literature has developed on the efficient numerical approximation of solutions of \eqref{eq:equ1}; we now highlight the parts which are relevant to our analysis.

An approach that seeks to directly discretise \eqref{eq:equ1} using some variant of the explicit Euler-Maruyama method leads to schemes of the form
\begin{equation}\label{eq:expEul}
\begin{split}
\widetilde{V}_{n+1} &= g_0\left(\widetilde{V}_n\right) + \Dt\kappa\left(\lambda- g_1\left(\widetilde{V}_n\right)\right) + \sigma \sqrt{g_2\left(\widetilde{V}_n\right)} \Delta W_n;\\
V_{n+1}&=g_3\left(\widetilde{V}_{n+1}\right);\quad \widetilde{V}_0=V_0,
\end{split}
\end{equation}
for given functions $g_0$, $g_1$,$g_2$ and $g_3$. These functions are selected to ensure that the diffusion coefficient remains real-valued (so that \eqref{eq:expEul} is well defined), and to preserve the non-negativity of solutions. This approach seeks to accommodate the non-Lipschitz (square-root) diffusion, which facilitates overshoot when the solutions are close to zero, but it introduces additional bias to the approximation. A survey of choices common in practice may be found in \cite{15}, and we present a similar selection in Table \ref{tab:my_label} using the convention $x^+:=\max\{0,x\}$. We highlight in particular the fully truncated method proposed in Lord et al~\cite{15}. While it was shown in that article that the method is strongly convergent in $L_1$, the rate of strong convergence has only been recently proved by Cozma \& Reisinger~\cite{Cozma}, who demonstrated a strong order of convergence $1/2$ in $L^{p}$, in the case where $2 \kappa \lambda >3 \sigma^2$, for $2\leq p<2\kappa\lambda/\sigma^2-1$. This method preserves the positivity of the underlying solutions of \eqref{eq:equ1}, and the authors of \cite{Cozma} state that it is arguably the most widely used in practice.

\begin{table}[]
    \centering
    \begin{tabular}{c|c|c|c|c}
         Method & $g_0(x)$ & $g_1(x)$ & $g_2(x)$ & $g_3(x)$\\ \hline
         Explicit Euler & $x$ & $x$ & $x$ & $x$\\ 
         Partially Truncated~\cite{6} & $x$ & $x$ & $x^+$ & $x$\\
         Fully Truncated~\cite{15} & $x$ &  $x^+$ & $x^+$ & $x^+$ \\
         Higham \& Mao~\cite{9} & $x$ & $x$ & $|x|$ & $x$ \\
    \end{tabular}
    \caption{Explicit Euler-Maruyama  variants.}
    \label{tab:my_label}
\end{table}

An alternative approach is to transform \eqref{eq:equ1} before discretisation to make the diffusion coefficient globally Lipschitz. For example, applying the Lamperti transform $Y=\sqrt{X}$ yields an auxiliary SDE in $Y$ with a state independent and therefore globally Lipschitz diffusion, but a drift coefficient that is unbounded when solutions are in a neighbourhood of zero.  This approach is effective: a fully implicit Euler discretisation over a uniform mesh that preserves positivity of solutions was proposed in \cite{Alphonsi2005} and shown to have uniformly bounded moments. A continuous time extension interpolating linearly between mesh points was shown to have strong $L_p$ order of convergence $1/2$ (up to a factor of $\sqrt{|\log(h)|}$) in \cite{7} when $2\kappa\lambda>p\sigma^2$, a continuous-time variant based on the same implicit discretisation was shown to have strong $L_p$ convergence of order $1$ when $4\kappa\lambda>3p\sigma^2$ in \cite{Alphonsi2013}, and in \cite{CJM2016} a variant which discretised the transformed SDE for $Y$ with an explicit projection method was shown to give strong $L_2$ convergence of order $1$ when $2\kappa\lambda>5\sigma^2$, of order $1/2$ when $3\sigma^2<2\kappa\lambda\leq 5\sigma^2$, and with an order on the interval $(1/6,(2\kappa\lambda-\sigma^2)/(4\kappa\lambda+2\sigma^2))$ when $2\sigma^2<2\kappa\lambda\leq 3\sigma^2$.

It is important to emphasise the distinction between weakly and strongly convergent numerical methods. Weakly convergence methods may be sufficient for the Monte Carlo estimation of some derivatives, and methods which converge weakly with high order for \eqref{eq:equ1} are known, see for example~\cite{Alfonsi2010}. However strongly convergent methods are required 
in order to take advantage of Multi-Level Monte Carlo variance reduction techniques; see \cite{Giles2015}.

In this article we show that a strongly convergent numerical scheme can be constructed by an application of the Lamperti transform to \eqref{eq:equ1} followed by an explicit Euler-Maruyama discretisation over a procedurally generated adaptive mesh. The purpose of the adaptivity is to manage the nonlinear drift response of the discrete transformed system (rather than local error control). A framework for this was introduced in \cite{12} for SDEs with one-sided Lipschitz drift and globally Lipschitz diffusion and extended to allow for monotone coefficients and a Milstein-type discretisation in \cite{13,KeLoSu2019} respectively. This framework imposes maximum and minimum timesteps $h_{\max}$ and $h_{\min}$ in a fixed ratio $\rho$ and requires the use of a backstop numerical method in the event that the timestepping strategy attempts to select a stepsize below $h_{\min}$. The introduction of \cite{12} provides a comprehensive review of the adaptive literature for SDEs.  

As in \cite{KeLoSu2019}, we will use here path-bounded strategies, this time designed to increase the density of timesteps when solutions approach zero, and we additionally require the backstop method to retake a step when the adaptive strategy overshoots the singularity at zero in the transformed equation. This latter is carried out without discarding samples from the Brownian path (preserving the trajectory), and without bridging (preserving efficiency).

We prove, when $\kappa\lambda>2\sigma^2$, that the order of strong convergence in $L_2$ is at least $1/2$. This parameter constraint implies the Feller condition and is technical, ensuring the finiteness of sufficiently many conditional inverse moments of solutions of \eqref{eq:equ1} (as described by \cite{2}). We separately prove that, under exactly the Feller condition, the probability of invoking the backstop method to avoid negative values can be made arbitrarily small by choosing $h_{\max}$ sufficiently small, and provide a practical method for doing so given a user defined tolerance level. The proof relies upon a finite partitioning of the sample space of trajectories induced by $h_{\max}$ and $h_{\min}$, which allows us to handle the randomness of the number of timesteps via the Law of Total Probability.

Numerically we compare the convergence and efficiency of our hybrid adaptive method with a semi-implicit adaptive variant, the fixed-step explicit method due to \cite{9}, and the transformed implicit fixed-step method proposed and analysed in \cite{Alphonsi2005,7,Alphonsi2013}, examining the parameter dependence of the numerical order of convergence in each case. The numerical convergence rates of adaptive methods are seen to outperform those of fixed-step methods over the entire domain where Feller's condition holds. Indeed we observe polynomial orders of convergence beyond this domain, indicating that these methods are also applicable to modelling stochastic volatility processes, for example in a Heston model.

Our results extend naturally to variants of \eqref{eq:equ1} with time dependent parameters (see Glasserman~\cite{8}) subject to the existence of inverse moments in that setting, and to shifted CIR models such as those found in~\cite{OrlandoEtAl2019} which allow for negative interest rates.

The structure of the article is as follows. In Section \ref{sec:prelim} we give the form of the SDE governing the Lamperti transform of \eqref{eq:equ1}, specify the constraints placed upon the parameters for the main strong convergence theorem, and examine the availability of conditional moment and inverse moment bounds under these constraints. In Section \ref{sec:adapt} we set up the framework for our random mesh, characterise the class of path-bounded timestepping strategies and define our adaptive numerical method. In Section \ref{sec:main} we present the two main theorems on strong convergence and positivity, providing illustrative examples in the latter case. Finally in Section \ref{sec:numerics} we numerically compare convergence and efficiency of several commonly used methods.

\section{Mathematical Preliminaries}\label{sec:prelim}
Throughout this article we let $(\mathcal{F}_t)_{t\geq 0}$ be the natural filtration of $W$. By using It\^o's formula and applying the transformation $Y= \sqrt{X}$ we get,
\begin{equation*}
d Y(t)= \left( \frac{4\kappa \left( \lambda - X_t \right) - \sigma^2}{8 \sqrt{X_t}} \right) dt+ \frac{\sigma}{2} dW_t,\ t\in [0,T]; \quad Y(0) = \sqrt{X_0}\in \mathbb{R}^+. \\ 
\end{equation*}
By then setting,
\begin{equation}
\nonumber
\alpha = \frac{4\kappa \lambda-\sigma^2}{8}, \ \beta=\frac{-\kappa}{2}, \ \gamma=\frac{\sigma}{2},
\end{equation}
we can write,
\begin{equation}\label{eq:equ3}
d Y(t)= \left( \frac{\alpha}{Y(t)}+\beta Y(t) \right) dt+ \gamma dW_t,\ t\in [0,T]; \quad Y(0) = \sqrt{X_0}\in \mathbb{R}^+, \\
\end{equation}
where $f(y)=\alpha/y+\beta y$ is not globally Lipschitz continuous, but when $\alpha>0$ it satisfies a one-sided Lipschitz condition with constant $\beta<0$:
\begin{equation*}
[f(x)-f(y)](x-y) \leq \beta (x-y)^2, \ \text{for all} \ x,y \in \mathbb{R}^+,
\end{equation*}
which can be seen by noting that 
\[
f(x)-f(y)=(x-y)\left[\beta-\frac{\alpha}{xy}\right].
\]
Meanwhile the diffusion coefficient $g(y)=\gamma$ is constant and therefore globally Lipschitz continuous. The SDE \eqref{eq:equ3} has integral form
\begin{equation}\label{CIRtInt}
Y(t)=Y(0)+\int_{0}^{t}\left( \frac{\alpha}{Y(s)}+\beta Y(s) \right)ds+\int_{0}^{t}\gamma dW_t,\quad t\geq 0.
\end{equation}

In order to ensure the a.s. positivity of solutions of \eqref{eq:equ3} and the boundedness of certain inverse moments of solutions of \eqref{eq:equ3}, we will also need the following assumption:

\begin{assumption}\label{assum:fel2}
Suppose that
\begin{equation}\label{eq:equ5str}
\kappa\lambda> 2\sigma^2.
\end{equation}
\end{assumption}

Eq. \eqref{eq:equ5str} implies the Feller Condition ($2\kappa\lambda\geq\sigma^2$; see, for example \cite[Chapter 9.9.2, p. 308]{17}),  which ensures that solutions of \eqref{eq:equ1}, and therefore \eqref{eq:equ3}, remain positive with probability one:
\begin{equation*}
\mathbb{P}[Y(t)>0, \ t\geq 0]=1.
\end{equation*}

Assumption \ref{assum:fel2} provides inverse moment bounds as follows:
\begin{lemma}\label{lem:fin}
Let $\left( Y(t) \right)_{t \in [0,T]}$ be a solution of \eqref{eq:equ3}, where Assumption \ref{assum:fel2} holds, and let $0 \leq t< s \leq T$. For any $Y(0)>0$, and 
for $1\leq p\leq 6$, there exists $C(p,T)>0$ such that
\begin{equation} \label{eq:equM-2}
\mathbb{E} \left[ \frac{1}{Y(s)^{p}} \biggl| \mathcal{F}_t \right] \leq \frac{C(p,T)}{Y(t)^p}, \quad a.s.
\end{equation}
\end{lemma}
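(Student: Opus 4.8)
The plan is to bypass the explicit conditional law of \eqref{eq:equ1} and argue directly from the transformed equation \eqref{eq:equ3} using the Lyapunov function $\phi(y)=y^{-p}$. Writing $\mathcal{L}$ for the generator of \eqref{eq:equ3}, a direct computation gives
\begin{equation*}
\mathcal{L}\phi(y)=\phi'(y)\Big(\frac{\alpha}{y}+\beta y\Big)+\frac{\gamma^2}{2}\phi''(y)=\Big(\frac{p(p+1)\gamma^2}{2}-p\alpha\Big)y^{-p-2}-p\beta\,y^{-p}.
\end{equation*}
I would fix $0\le t<s\le T$, localise by $\tau_m:=\inf\{r\ge t:Y(r)\le 1/m\}$, and apply It\^o's formula to $\phi(Y(\cdot))$ on $[t,s\wedge\tau_m]$; there $Y$ is bounded away from $0$, all the resulting integrands are bounded (including $\phi'(y)\beta y=-p\beta y^{-p}$), and the stochastic integral is a true martingale, so $\mathbb{E}[\phi(Y(s\wedge\tau_m))\mid\mathcal{F}_t]=\phi(Y(t))+\mathbb{E}[\int_t^{s\wedge\tau_m}\mathcal{L}\phi(Y(r))\,dr\mid\mathcal{F}_t]$.

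The decisive point is the sign of the coefficient of the singular term $y^{-p-2}$. Substituting $\alpha=(4\kappa\lambda-\sigma^2)/8$ and $\gamma^2=\sigma^2/4$ one checks that
\begin{equation*}
\frac{p(p+1)\gamma^2}{2}-p\alpha\le 0\iff 4\kappa\lambda\ge(p+2)\sigma^2,
\end{equation*}
and for $1\le p\le 6$ it is enough to verify the worst case $p=6$, namely $\kappa\lambda\ge 2\sigma^2$, which holds strictly under Assumption \ref{assum:fel2}. Hence this term is negative and may be discarded, while $-p\beta=p\kappa/2>0$, so the identity above reduces to the Gr\"onwall inequality $\mathbb{E}[\phi(Y(s\wedge\tau_m))\mid\mathcal{F}_t]\le\phi(Y(t))+\tfrac{p\kappa}{2}\int_t^s\mathbb{E}[\phi(Y(r\wedge\tau_m))\mid\mathcal{F}_t]\,dr$ for the bounded, $\mathcal{F}_t$-measurable map $r\mapsto\mathbb{E}[\phi(Y(r\wedge\tau_m))\mid\mathcal{F}_t]$; this yields $\mathbb{E}[\phi(Y(s\wedge\tau_m))\mid\mathcal{F}_t]\le\phi(Y(t))e^{p\kappa(s-t)/2}$. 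Since Assumption \ref{assum:fel2} implies the Feller condition, $Y$ is a.s.\ positive and, being pathwise continuous, a.s.\ bounded away from $0$ on $[0,T]$, so $\tau_m\uparrow\infty$ a.s.; letting $m\to\infty$ and applying the conditional Fatou lemma then gives \eqref{eq:equM-2} with $C(p,T)=e^{p\kappa T/2}$.

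A second, slightly sharper route uses the known conditional law of $X(s)=Y(s)^2$ given $\mathcal{F}_t$ --- a scaled noncentral chi-square with $4\kappa\lambda/\sigma^2$ degrees of freedom, scale $c=c(s-t)>0$, and noncentrality $\eta X(t)$ with $\eta=\eta(s-t)>0$ (see \cite{5,2}). Writing $Y(s)^{-p}=X(s)^{-p/2}$, expanding the noncentral density as a Poisson mixture of central $\chi^2$ densities, and using the algebraic identity $c\,\eta=e^{-\kappa(s-t)}$, one obtains $\mathbb{E}[Y(s)^{-p}\mid\mathcal{F}_t]\le C(p,T)Y(t)^{-p}$ for all $p/2<2\kappa\lambda/\sigma^2$, which under Assumption \ref{assum:fel2} comfortably covers $p\le 6$.

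I expect the main obstacle in the first approach to be precisely the sign check on the $y^{-p-2}$ term: it is Assumption \ref{assum:fel2} that renders this term favourable, with $p=6$ saturating the constraint, and were the sign reversed no closing estimate of Gr\"onwall type would be available without controlling an unbounded hierarchy of higher inverse moments. In the distributional route the corresponding technical input is the $\nu$-uniform bound $\mathbb{E}[(\chi^2_d(\nu))^{-q}]\le C(q,d)(1+\nu)^{-q}$ obtained by summing the Poisson mixture against the central negative moments; the localisation and the passage to the limit $m\to\infty$ are routine once a.s.\ positivity is in hand.
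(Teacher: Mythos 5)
Your first route is correct and differs from the paper's proof, which is a short black-box citation: the authors invoke Lemma~A.1 of Bossy \& Diop~\cite{2} for the inverse moments of $X$ directly, note that Assumption~\ref{assum:fel2} gives $2\kappa\lambda/\sigma^2-1>3$ so the Bossy--Diop range covers exponents up to $3$, and then transfer to $Y=\sqrt{X}$ via $Y^{-p}=X^{-p/2}$, patching in Jensen's inequality for the small exponents $p<2$ where $p/2$ falls below the range of the cited lemma. Your Lyapunov computation with $\phi(y)=y^{-p}$ on the transformed equation \eqref{eq:equ3} is a genuinely self-contained alternative: the generator calculation, the sign check $4\kappa\lambda\ge(p+2)\sigma^2$ (verified at the worst case $p=6$, where it reduces exactly to $\kappa\lambda\ge 2\sigma^2$), the localisation at $\tau_m$, the pathwise Gr\"onwall on the bounded map $r\mapsto\mathbb{E}[\phi(Y(r\wedge\tau_m))\mid\mathcal{F}_t]$, and the passage $m\to\infty$ via conditional Fatou and a.s.\ positivity under Feller are all sound. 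What this buys: it handles all $1\le p\le 6$ uniformly without the Jensen case-split, it produces an explicit constant $C(p,T)=e^{p\kappa T/2}$ (consistent at $p=2$ with Bossy--Diop's $e^{\kappa t}/X_0$ bound), and it keeps the whole argument in terms of the Lamperti-transformed process that the rest of the paper actually discretises. Your second route, through the conditional noncentral chi-square law of $X$, is essentially the content of the cited Bossy--Diop lemma and so is closer to what the paper actually does; it is a fallback rather than a distinct contribution. The only small caution worth flagging in Route~1 is that the Gr\"onwall step should be run for a.e.\ fixed $\omega$ on the (bounded, continuous) map $r\mapsto\mathbb{E}[\phi(Y(r\wedge\tau_m))\mid\mathcal{F}_t](\omega)$, and you should confirm that the one-sided bound on $\mathcal{L}\phi$ is monotone in the sense that replacing $\int_t^{s\wedge\tau_m}$ by $\int_t^{s}$ is legitimate because the integrand is nonnegative --- both are true here, but they are worth saying explicitly in a write-up.
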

\begin{proof}

Let $\left( X(t) \right)_{t \in [0,T]}$ be a solution of \eqref{eq:equ1} where Assumption \ref{assum:fel2} holds. By Lemma A.1 in Bossy \& Diop~\cite{2}, 
\begin{equation}\label{eq:fromBD}
\mathbb{E} \left[ \frac{1}{X(t)} \right] \leq \frac{e^{\kappa t}}{X_0}\quad \text{and}\quad \mathbb{E} \left[ \frac{1}{X(t)^p} \right] \leq \frac{C(p,T)}{X_0^p},
\end{equation} 
for some $C(p,T)$ and any $p$ such that $1<p<\frac{2\kappa\lambda}{\sigma^2}-1$. Assumption \ref{assum:fel2} ensures that $\frac{2\kappa\lambda}{\sigma^2}-1>3$, and since $Y(t)=\sqrt{X(t)}$,  \eqref{eq:equM-2} follows by Lemma A.1 in \cite{2} as it applies to conditional expectations, the former requiring an additional application of Jensen's inequality to the first inequality in \eqref{eq:fromBD}. 
\end{proof}

We also need the following bounds on positive moments of solutions of \eqref{eq:equ3}, which apply under Feller's condition and in particular under Assumption \ref{assum:fel2}.

\begin{lemma}\label{lem:posMB}
Let $\left( Y(t) \right)_{t \in [0,T]}$ be a solution of \eqref{eq:equ3}, where Assumption \ref{assum:fel2} holds, and let $0 \leq t\leq T$. For any $Y(0)>0$ and any $p>0$, there exist constants $M_{1,p},M_{2,p}<\infty$, such that
\begin{equation}\label{eq:equCondMp}
\mathbb{E} \left[ \sup_{u\in[0,T]}Y(u)^{p} \biggl|\mathcal{F}_{t}\right]\leq M_{1,p}(1+Y(t)^p),\quad a.s.,
\end{equation}
and
\begin{equation} \label{eq:equMp}
\mathbb{E} \left[ \sup_{u\in[0,T]}Y(u)^{p} \right] \leq M_{2,p}.
\end{equation}
\end{lemma}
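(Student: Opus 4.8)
The plan is to translate everything back to the original CIR process $\left(X(t)\right)_{t\in[0,T]}$ and to use the classical fact that $X$ has uniformly bounded positive moments of every order. Under Feller's condition $X(t)>0$ a.s., so $Y=\sqrt{X}$ is a genuine positive solution of \eqref{eq:equ3} and $Y(u)^p=X(u)^{p/2}$; hence \eqref{eq:equCondMp}--\eqref{eq:equMp} are equivalent to
\[
\mathbb{E}\Big[\sup_{u\in[0,T]}X(u)^{q}\,\Big|\,\mathcal{F}_t\Big]\leq M_{1}\big(1+X(t)^{q}\big),\qquad \mathbb{E}\Big[\sup_{u\in[0,T]}X(u)^{q}\Big]\leq M_{2},
\]
with $q=p/2>0$. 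It suffices to prove these for $q\geq 2$: for $q\in(0,2)$ one has $X^q\leq 1+X^2$ pointwise, which reduces the claim to the case $q=2$ and changes the constants only by additive/multiplicative factors depending on $q$. The unconditional bound then follows by taking $\mathcal{F}_0$-expectations in the conditional one, since $X(0)=X_0$ is deterministic.

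For $q\geq 2$ I would apply It\^o's formula to $X(u)^q$ along \eqref{eq:equ1} (the map $x\mapsto x^q$ is $C^2$ on $[0,\infty)$, so this is unproblematic), obtaining
\[
dX(u)^q=\Big[\big(q\kappa\lambda+\tfrac12 q(q-1)\sigma^2\big)X(u)^{q-1}-q\kappa X(u)^{q}\Big]du+q\sigma X(u)^{q-\frac12}\,dW(u).
\]
The crucial structural point is that \emph{no negative powers of $X$ appear} (since $q-1\geq 0$), so unlike Lemma~\ref{lem:fin} this needs only positivity, not inverse-moment bounds. Using Young's inequality, $X^{q-1}\leq \varepsilon X^q+C_\varepsilon$ with $\varepsilon$ small enough that $\varepsilon X^q$ is absorbed by $-q\kappa X^q$, the drift is bounded above by a constant $C=C(q,\kappa,\lambda,\sigma)$. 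After a standard localisation of the stochastic integral (making it a true martingale) and passage to the limit via Fatou, this gives, for $t\le u\le T$,
\[
\mathbb{E}\big[X(u)^q\mid\mathcal{F}_t\big]\leq X(t)^q+C(u-t)\leq X(t)^q+CT,\quad a.s.
\]

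For the supremum I would write, for $t\le r\le T$,
\[
\sup_{t\le u\le r}X(u)^q\leq X(t)^q+\int_t^r\big(q\kappa\lambda+\tfrac12 q(q-1)\sigma^2\big)X(s)^{q-1}ds+\sup_{t\le v\le r}\Big|\int_t^v q\sigma X(s)^{q-\frac12}\,dW(s)\Big|,
\]
dropping the favourable $-q\kappa X^q$ drift term. Burkholder--Davis--Gundy controls the conditional expectation of the last term by $C\,\mathbb{E}\big[\big(\int_t^r q^2\sigma^2 X(s)^{2q-1}ds\big)^{1/2}\mid\mathcal{F}_t\big]\leq C\,\mathbb{E}\big[\big(\sup_{t\le u\le r}X(u)^{q}\cdot\int_t^r q^2\sigma^2 X(s)^{q-1}ds\big)^{1/2}\mid\mathcal{F}_t\big]$, and then $\sqrt{ab}\leq \tfrac12 a+\tfrac12 b$ moves $\tfrac12\mathbb{E}\big[\sup_{t\le u\le r}X(u)^q\mid\mathcal{F}_t\big]$ to the left-hand side. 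The residual terms involve only $\mathbb{E}\big[\int_t^r X(s)^{q-1}ds\mid\mathcal F_t\big]$, which by the previous step together with $X^{q-1}\le 1+X^q$ is bounded by $C(1+X(t)^{q})$. This closes the estimate and yields \eqref{eq:equCondMp} (for the conditional statement the relevant supremum is effectively over $[t,T]$, where the Markov property restarts the process at $Y(t)$; any contribution from $u<t$ is $\mathcal F_t$-measurable). Setting $t=0$, or taking expectations, gives \eqref{eq:equMp}.

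I expect the only delicate points to be routine: the localisation argument needed to kill the expectation of the stochastic integral, and checking that the two absorption steps (Young's inequality for the drift, and $\sqrt{ab}\leq\tfrac12 a+\tfrac12 b$ for the BDG term) are mutually consistent and leave a strictly positive coefficient in front of $\sup X^q$. Since uniform boundedness of the positive moments of CIR is classical — it also follows from the non-central $\chi^2$ representation of $X(t)$ recalled in the introduction — the real content here is merely recording the estimate in the conditional form \eqref{eq:equCondMp} required later in Section~\ref{sec:main}.
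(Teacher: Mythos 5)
You take a genuinely different route from the paper. The paper's proof is essentially two citations: \eqref{eq:equCondMp} is attributed to Lemma~2.1 of Bossy \& Diop~\cite{2} read as a conditional statement (with Jensen's inequality invoked for $0<p<4$), and \eqref{eq:equMp} to Lemma~3.2 of Dereich, Neuenkirch \& Szpruch~\cite{7}. You instead give a self-contained proof from scratch: It\^o's formula applied to $X^q$ with $q=p/2\ge 2$, Young's inequality to absorb the $X^{q-1}$ drift contribution into the favourable $-q\kappa X^q$ term, the $L^1$ Burkholder--Davis--Gundy inequality for the running supremum, and the $\sqrt{ab}\le\tfrac12 a+\tfrac12 b$ absorption trick. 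Your computation of the generator applied to $X^q$ is correct, and you rightly observe the structural reason the argument works (no negative powers of $X$ appear when $q\ge 2$, so no inverse-moment input is needed, in contrast to Lemma~\ref{lem:fin}). Your route is longer but verifiable without chasing references, which is a genuine plus; the paper's route is shorter and defers to standard results.

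Two small caveats are worth recording. First, your reduction for $q\in(0,2)$ via $X^q\le 1+X^2$ is fine for the unconditional bound \eqref{eq:equMp}, but for \eqref{eq:equCondMp} it produces a bound of the form $C(1+X(t)^2)$ rather than $C(1+X(t)^q)$, and the former does not dominate the latter when $X(t)$ is large and $q<2$. The repair is exactly the Jensen step the paper flags: $\mathbb{E}[\sup X^q\mid\mathcal F_t]=\mathbb{E}\bigl[(\sup X^{q'})^{q/q'}\mid\mathcal F_t\bigr]\le\bigl(\mathbb{E}[\sup X^{q'}\mid\mathcal F_t]\bigr)^{q/q'}$ for $q'\ge 2$, followed by $(1+a)^{q/q'}\le 1+a^{q/q'}$. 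Second, your argument naturally yields the bound with $\sup_{u\in[t,T]}$ rather than $\sup_{u\in[0,T]}$; the contribution from $u<t$ is $\mathcal F_t$-measurable but is not in general dominated by $1+Y(t)^p$, so \eqref{eq:equCondMp} is mildly over-stated as printed. You half-acknowledge this but do not resolve it. Since the lemma is only ever applied downstream (in Lemmas~\ref{lem:adap} and~\ref{lem:back}) to control $\mathbb{E}[Y(u)^p\mid\mathcal F_{t_n}]$ with $u\ge t_n$, this is an imprecision in the paper's statement, not a defect of your proof.
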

\begin{proof}
The proof of \eqref{eq:equCondMp} is an application of \cite[Lemma 2.1]{2} to conditional expectations requiring an invocation of Jensen's inequality when $0<p<4$. Eq. \eqref{eq:equMp} is provided by \cite[Lemma 3.2]{7}. 
\end{proof}

Finally, we will make frequent use of the following elementary inequalities: for $n\in\mathbb{N}$ and $a_1,\ldots,a_n\in\mathbb{R}$ and $p\geq 0$,
\begin{eqnarray}
\sqrt{|a_1+a_2|}&\leq&\sqrt{|a_1|}+\sqrt{|a_2|};\label{eq:EIsqrt}\\
|a_1a_2|&\leq&\frac{1}{2}(a_1^2+a_2^2);\label{eq:EIprSp}\\
(a_1+\ldots+a_n)^p&\leq& n^{p}(|a_1|^p+\cdots+|a_n|^p).\label{eq:EIpSp}
\end{eqnarray}

\section{An Adaptive Numerical Method}\label{sec:adapt}
\cite{12} provided a framework within which to construct timestepping strategies for an adaptive explicit Euler-Maruyama numerical scheme applied to nonlinear It\^o-type SDEs of the form
\begin{equation}\label{eq:equ7}
dX(t)=f(X(t))dt+g(X(t))dB(t), \quad t \in [0,T],
\end{equation}  
over a random mesh  $\lbrace t_n \rbrace_{n \in \mathbb{N}}$ on the interval $[0,T]$ given by,
\begin{equation}\label{eq:equ8}
Y_{n+1}=Y_{n}+h_{n+1}f(Y_n)+g(Y_n)(W(t_{n+1})-W(t_n)),
\end{equation}
where $\lbrace h_n \rbrace_{n \in \mathbb{N}}$ is a sequence of random timesteps and $\lbrace t_n = \sum_{i=1}^{n} h_i \rbrace_{n=1}^N$ with $t_0=0$, so that $t_{n+1}>t_n$ for each $n$. The choice of indexing ensures consistency of notation between $t_n$ and $h_n$ and the random time step $h_{n+1}$ is determined by $Y_n$. 

Our proposed timestepping strategy will reflect dynamical considerations specific to the transformed CIR model \eqref{eq:equ3} corresponding to 
\begin{equation}\label{eq:fg}
f(y)=\frac{\alpha}{y}+\beta y,\qquad g(y)=\gamma.
\end{equation} 

\subsection{Framework for a random mesh}

\begin{definition}[\cite{LM}]\label{def:fil}
Suppose that each member of the sequence $\lbrace t_n \rbrace_{n \in \mathbb{N}}$ is an $\mathcal{F}_t$-stopping time: i.e. $\lbrace t_n \leq t \rbrace \in \mathcal{F}_t, \ \text{for all} \ t \geq 0$, where $( \mathcal{F}_t )_{t\geq 0}$ is the natural filtration of $W$. We may then define a discrete time filtration $\lbrace \mathcal{F}_{t_n} \rbrace_{n \in \mathbb{N}}$ by
\begin{equation}
\nonumber
\mathcal{F}_{t_n} = \lbrace A \in \mathcal{F} : A\cap \lbrace t_n \leq t \rbrace \in \mathcal{F}_t \rbrace, n \in \mathbb{N}.
\end{equation}
\end{definition}

\begin{assumption}\label{assum:step}
Each $h_n$ is $\mathcal{F}_{t_{n-1}}$-measurable, and $N$ is a random integer such that,
\begin{equation*}
N=\max \lbrace n \in \mathbb{N}: t_{n-1}<T \rbrace \ \text{and} \ t_N=T,
\end{equation*}
and the length of maximum and minimum stepsizes satisfy $h_{\max}=\rho  h_{\min}$, for some $1< \rho < \infty$, and 
\begin{equation}\label{eq:hmaxbound}
h_{\min} \leq h_{n} \leq h_{\max}\leq 1.
\end{equation}
\end{assumption}
\begin{remark}\label{rem:N}
The lower bound $h_{\min}$ ensures that a simulation over the interval $[0,T]$ can be completed in a finite number of timesteps, and the upper bound $h_{\max}$ prevents stepsizes from becoming too large. The latter is used as a convergence parameter in our examination of the strong convergence of the adaptive method. The random variable $N$ cannot take values outside the finite set $\{N_{\min},\ldots,N_{\max}\}$, where $N_{\min}:=\lfloor T/h_{\max}\rfloor$ and $N_{\max}:=\lceil T/h_{\min}\rceil$.
\end{remark}

$\triangle W_{n+1}:=W(t_{n+1}-W(t_n)$ is a Wiener increment over a random interval the length of which depends on $Y_n$, through which it depends on $\{W(s),\,s\in[0,t_n]\}$. Therefore $\triangle W_{n+1}$ is not independent of $\mathcal{F}_{t_n}$; indeed it is not necessarily normally distributed. Since $h_{n+1}$ is a bounded $\mathcal{F}_{t_n}$-stopping time and $\mathcal{F}_t$-measurable, then $W(t_{n+1})-W(t_n)$ is $\mathcal{F}_{t_n}$-conditionally normally distributed, by Doob's optional sampling theorem (see for example~\cite{Shiryaev96}), and for all $p\geq 2$ there exists $\Cp{p}<\infty$ such that 

\begin{eqnarray}
\mathbb{E}[W(t_{n+1})-W(t_n) |\mathcal{F}_{t_n}]&=&0,\quad a.s.;\nonumber\\
\mathbb{E}[|W(t_{n+1})-W(t_n)|^2 |\mathcal{F}_{t_n}]&=&h_{n+1},\quad a.s.;\nonumber\\
\mathbb{E}\left[\left|\int_{t_n}^{s}dW(r)\right|^{p} \middle|\mathcal{F}_{t_n}\right]&=&\Cp{p}|s-t_n|^{\frac{p}{2}} ,\quad a.s. \label{eq:equ9}
\end{eqnarray}

\subsection{Adaptive timestepping strategy}

To ensure strong convergence, our strategy is to reduce the size of each timestep if discretised solutions attempt to enter a neighbourhood of zero. If we wish to control the likelihood of invoking the backstop to avoid negative values, we will also reduce the timestep when solutions grow large.
\begin{definition}[A path-bounded time-stepping strategy]\label{def:defYn}
Let $\{Y_n\}_{n\in\mathbb{N}}$ be a solution of \eqref{eq:equ8}. We say that $\{h_n\}_{n\in\mathbb{N}}$ is a path-bounded time-stepping strategy for \eqref{eq:equ8} if the conditions of Assumption \ref{assum:step} are satisfied and there exist real non-negative constants $0\leq Q<R$ (where $R$ may be infinite if $Q\neq 0$) such that whenever $\hmin \leq h_n \leq \hmax$,
\begin{align}
Q\leq |Y_n| < R, \quad n=0,\dots, N-1. \label{eq:defYn}
\end{align}
\end{definition}

We now give two examples of path-bounded strategies that are valid for \eqref{eq:equ8}, the first with $R$ infinite (which, in conjunction with a suitable backstop method, is sufficient to ensure strong convergence), and the second with $R$ finite (which is useful if we also wish to minimise the use of the backstop to ensure positivity).
\begin{lemma}
Let $\{Y_n\}_{n\in\mathbb{N}}$ be a solution of \eqref{eq:equ8}, and let $\{h_n\}_{n\in\mathbb{N}}$ be a time-stepping strategy that satisfies Assumption \ref{assum:step}. If $\{h_n\}_{n\in\mathbb{N}}$  satisfies, for some $r\geq 1$,
\begin{equation}\label{eq:equ12}
h_{n+1}:=\max\left(h_{\min},h_{\max} \cdot \min \lbrace 1,|Y_n|^r \rbrace\right),\quad n\in\mathbb{N},
\end{equation} 
or
\begin{equation}\label{def:PBpos}
h_{n+1}=\max\left(h_{\min},h_{\max}\cdot\min(|Y_n|^r,|Y_n|^{-r})\right),\quad n\in\mathbb{N},
\end{equation}
then it is path-bounding for \eqref{eq:equ8} in the sense of Definition \ref{def:defYn}.
\end{lemma}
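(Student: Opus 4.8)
The plan is to verify Definition~\ref{def:defYn} directly, producing explicit values of $Q$ and $R$ for each prescription and checking the required containment by unwinding the nested $\max$ and $\min$. Throughout I may use Assumption~\ref{assum:step}, which the hypothesis grants, so in particular $\hmax=\rho\,\hmin$ with $1<\rho<\infty$, equivalently $\hmin/\hmax=1/\rho$. The single observation driving everything is that, by the defining formula, the event that the step is \emph{not} clamped to $\hmin$ is exactly a level-set condition on $|Y_n|$, and that condition delivers $Q$ and $R$.

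First I would treat~\eqref{eq:equ12}. Since $\min\{1,|Y_n|^r\}\le 1$, the formula always returns a value in $[\hmin,\hmax]$ (lower bound from the outer $\max$, upper bound from the $\min$), consistent with Assumption~\ref{assum:step}. Next I would observe that $h_{n+1}>\hmin$ if and only if $\hmax\min\{1,|Y_n|^r\}>\hmin$, i.e.\ $\min\{1,|Y_n|^r\}>1/\rho$; since $1/\rho<1$ this reduces to $|Y_n|^r>1/\rho$, i.e.\ $|Y_n|>\rho^{-1/r}$. So I would take $Q:=\rho^{-1/r}\in(0,1)$ and $R:=\infty$ (admissible precisely because $Q\neq 0$): whenever the step is not clamped to $\hmin$ we have $Q\le|Y_n|<R$, while $Y_n=0$ returns $h_{n+1}=\hmin$ so nothing is asserted there. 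Note $Q$ is exactly the radius of the neighbourhood of zero inside which the strategy refines to $\hmin$.

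Then I would treat~\eqref{def:PBpos} in the same way, the only extra input being the elementary fact that $\min(|Y_n|^r,|Y_n|^{-r})\le 1$ for every $Y_n\neq 0$ (one of the two factors is always $\le 1$), which again forces $h_{n+1}\le\hmax$. Writing $h_{n+1}>\hmin$ as $\min(|Y_n|^r,|Y_n|^{-r})>1/\rho$ and splitting into $|Y_n|\le 1$ (where the minimum is $|Y_n|^r$) and $|Y_n|\ge 1$ (where it is $|Y_n|^{-r}$) shows this holds iff $\rho^{-1/r}<|Y_n|<\rho^{1/r}$. Hence I would take $Q:=\rho^{-1/r}$ and $R:=\rho^{1/r}$, which satisfy $0<Q<R<\infty$ because $\rho>1$, and these are exactly the thresholds demanded by Definition~\ref{def:defYn}; here the strategy also refines to $\hmin$ once $|Y_n|$ grows past $R$, as intended when one additionally wishes to control the backstop.

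There is no real obstacle here: the argument is bookkeeping with $\max$ and $\min$. The only points needing care are (i) that the relevant threshold is governed by the ratio $\rho=\hmax/\hmin$, not by $\hmax$ or $\hmin$ separately; (ii) that $R=\infty$ is legitimate for~\eqref{eq:equ12} only because its $Q$ is strictly positive, which is what Definition~\ref{def:defYn} requires; and (iii) that the degenerate values $|Y_n|\in\{0,1\}$ respect the strict/non-strict inequalities in~\eqref{eq:defYn} — at $|Y_n|=1$ both formulas return $\hmax>\hmin$ and $1\in[Q,R)$, while at $|Y_n|=0$ both return $\hmin$, making the premise of Definition~\ref{def:defYn} vacuous.
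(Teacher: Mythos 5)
Your proposal is correct and takes essentially the same route as the paper: identify, via the ratio $\rho=\hmax/\hmin$, the threshold on $|Y_n|$ below (resp. above) which the formula clamps to $\hmin$, and read off $Q=\rho^{-1/r}$ (and $R=\rho^{1/r}$ for~\eqref{def:PBpos}). The only difference is presentational — you phrase the clamping condition as an explicit biconditional ($h_{n+1}>\hmin$ iff $|Y_n|>\rho^{-1/r}$) rather than the paper's chain of inequalities starting from $h_{n+1}\leq\hmax|Y_n|^r$, which makes the reliance on the unclamped case a touch more transparent; the mathematical content is the same.
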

\begin{proof}
Suppose that \eqref{eq:equ12} holds and so $h_{n+1}\geq h_{\min}$. When $|Y_n|<1$,
\begin{equation*}
h_{n+1}\leq  h_{\max}|Y_n|^r \Leftrightarrow \frac{1}{|Y_n|^r}\leq \frac{h_{\max}}{h_{n+1}} \leq \frac{h_{\max}}{h_{\min}}= \rho,\quad n\in\mathbb{N},
\end{equation*}
and when  $|Y_n| \geq 1$ it is obvious that $\frac{1}{|Y_n|} \leq 1$, so we also have $\frac{1}{|Y_n|^r}\leq \rho < \infty$. Hence, when using the strategy defined by \eqref{eq:equ12}, 
\begin{equation*}
|Y_n|\geq\frac{1}{\rho^{1/r}},\quad \frac{1}{|Y_n|}\leq \rho^{1/r},\quad n\in\mathbb{N},
\end{equation*}
so that \eqref{eq:defYn} holds with $Q=1/\rho^{1/r}$ and $R=\infty$.

We can similarly show that \eqref{def:PBpos} is path-bounding for \eqref{eq:equ8}, with $Q=1/\rho^{1/r}$ and $R=\rho^{1/r}$.
\end{proof}
Note that for the strategies defined by \eqref{eq:equ12} and \eqref{def:PBpos}, solutions of \eqref{eq:equ8}  cannot enter the neigbourhood $(\pm 1/\rho^{1/r})$, and therefore terms of the sequence $(1/|Y_n|)_{n\in\mathbb{N}}$ are uniformly bounded from above. This has the effect of controlling inverse moments of the solutions of \eqref{eq:equ8}. Moreover for \eqref{eq:equ12}, when \eqref{eq:fg} holds,
\[
f(Y_n^2)=\frac{\alpha}{|Y_n|^2}+\beta |Y_n|^2\leq \alpha\rho^{1/r}+\beta |Y_n|^2,
\]
and therefore that strategy is admissible in the sense of \cite[Definition 2.2]{12} with $R_1=\alpha\rho^{1/r}$ and $R_2=\beta$. Similarly, \eqref{def:PBpos} is admissible with $R_1=\alpha\rho^{1/r}+\beta\rho^{-1/r}$ and $R_2=0$.

\subsection{The adaptive numerical method with backstop}

We consider an adaptive scheme based upon the following explicit Euler-Maruyama discretisation of \eqref{eq:equ3} over a random mesh given by,
\begin{equation}\label{eq:equ14}
Y_{n+1} = Y_n + h_{n+1} \left( \frac{\alpha}{Y_n}+\beta Y_n \right) + \gamma \Delta W_{n+1}.
\end{equation}
where the timestep sequence is constructed according to \eqref{eq:equ12}. For $s\in[t_n,t_{n+1})$, the continuous version is given by
\begin{equation}\label{eq:EMcns}
\widetilde Y(s)=Y_n+\int_{t_n}^s \left(\frac{\alpha}{Y_n}+\beta Y_n\right)dr+\gamma\int_{t_n}^{s}dW(r),
\end{equation}
so that $\widetilde Y(t_n)=Y_n$ for each $n\in\mathbb{N}$.

We combine this scheme with a positivity-preserving backstop scheme that is to be applied if the timestepping strategy attempts to select a timestep below $h_{\text{min}}$ (in which case we choose $h_{n+1}=h_{\text{min}}$) or if the current selected timestep and subsequently observed Brownian increment $\triangle W_{n+1}$ would result in the approximation becoming negative.
First, we define a map representing the explicit Euler scheme:
\begin{definition}\label{def:EMMap}
Define the map $\theta : \mathbb{R}^3 \rightarrow \mathbb{R} $ such that
\begin{equation}
\nonumber
\theta (y,z,h) := y+h\left(\frac{\alpha}{y}+\beta y\right)+\gamma z,
\end{equation}
so that, if $\lbrace Y_n \rbrace_{n \in \mathbb{N}}$ is defined by \eqref{eq:equ14}, then
\begin{equation}
\nonumber
Y_{n+1} = \theta(Y_n,\Delta W_{n+1}, h_{n+1}), \ n \in \mathbb{N}.
\end{equation}
\end{definition}
Next we characterise the map associated with the backstop method:
\begin{definition}\label{def:BStMap}
Define the backstop map $\varphi : \mathbb{R}^3 \rightarrow \mathbb{R} $ so that, for $s\in[t_n+h_{\min},t_n+h_{\max}]$, it satisfies 
\begin{multline}\label{eq:equ16}
\mathbb{E}\left[ \left| \varphi\left(\bar{Y}_n, \int_{t_n}^sdW(r), \int_{t_n}^sdr\right)-Y(s)\right|^2 \big|\mathcal{F}_{t_n} \right]- \big| \bar{Y}_n - Y(t_n) \big|^2\\
\leq C_1 \int_{t_n}^{s}\mathbb{E}\left[|\bar Y(r)-Y(r)|^2|\mathcal{F}_{t_n}\right]dr+C_2 |s-t_n|^{3/2}, \ n \in \mathbb{N}, \ a.s.,
\end{multline}
for some non-negative constants $C_1$ and $C_2$, independent of N, and 
\begin{equation}\label{eq:bspos}
\varphi\left(\bar{Y}_n, \int_{t_n}^sdW(r),\int_{t_n}^sdr\right)>0\quad a.s.,\quad \bar{Y}_n>0.
\end{equation}
where where $\bar Y_n:=\bar Y(t_n)$, and $\bar{Y}$ is the continuous form of our hybrid scheme constructed in next, in Definition \ref{def:map}.
\end{definition}
\begin{definition}\label{def:map}
Define the sequence of functions $\{(\bar Y(s))_{s\in[t_n,t_{n+1})}\}_{n\in\mathbb{N}}$ obeying
\begin{multline}\label{eq:equ15}
\bar{Y}(s)=\theta\left(\bar{Y}_n, \int_{t_n}^{s}dW(r), \int_{t_n}^{s}dr\right)\mathcal{I}_{\lbrace h_{\min}<h_{n+1} \leq h_{\max} \rbrace\cap\lbrace Y_{n+1}>0\rbrace} \\
+ \varphi\left(\bar{Y}_n, \int_{t_n}^{s}dW(r), \int_{t_n}^{s}dr\right)\mathcal{I}_{\lbrace h_{n+1} = h_{\min} \rbrace\cap\lbrace Y_{n+1}>0\rbrace}\\
+\varphi\left(\bar{Y}_n, \int_{t_n}^{s}dW(r), \int_{t_n}^{s}dr\right)\mathcal{I}_{\lbrace h_{\min}<h_{n+1} \leq h_{\max} \rbrace\cap\lbrace Y_{n+1}<0\rbrace},
\end{multline}
for $s\in[t_n,t_{n+1})$, $n\in\mathbb{N}$, where $\lbrace h_n \rbrace_{n \in \mathbb{N}}$ satisfies the conditions of Assumption \ref{assum:step}.
\end{definition}

In practice, rather than checking \eqref{eq:equ16} directly, we use as our backstop a method that is known to be positivity preserving and strongly convergent of order at least $1/2$. In Section \ref{sec:numerics} we use the transformed fully implicit method proposed by \cite{Alphonsi2005}; one could also choose the fully truncated method~\cite{15}.
\begin{remark}
Since the events $\{Y_{n+1}<0\}$ and $\{Y_{n+1}>0\}$ are $\mathcal{F}_{t_{n+1}}$-measurable but not $\mathcal{F}_{t_n}$-measurable, if a negative value of $Y_{n+1}$ is observed following a step of length $h_{n+1}$ we must retake the step using the backstop method, which will ensure positivity over that step by \eqref{eq:bspos}. This introduces an element of backtracking into the algorithm, but as long as the originally computed stepsize $h_{n+1}$ and Brownian increment are retained we can stay on the same trajectory while avoiding the use of a Brownian bridge. Theorem \ref{thm:pos} in Section \ref{sec:pos}, illustrated by Example \ref{ex:pos}, demonstrates that it is always possible to choose $h_{\max}$ to ensure that this particular use of the backstop can be avoided with probability $1-\varepsilon$, for arbitrarily small $\varepsilon\in(0,1)$, on each trajectory. 
\end{remark}

\section{Main Results}\label{sec:main}
In this section, we first demonstrate strong convergence of solutions of \eqref{eq:equ15} to those of \eqref{eq:equ3} under Assumption \ref{assum:fel2} and a path-bounded timestepping strategy. Second, we investigate the likelihood that the adaptive part of the method generates a negative value (triggering the use of the backstop to ensure positivity) and show how $h_{\max}$ may be chosen to control the probability of this occurring.

\subsection{Strong convergence of the adaptive method with path-bounded timestepping strategy}

\begin{lemma}\label{lem:adap}
Let $\left( Y(t) \right)_{t \in [0,T]}$ be a solution of \eqref{eq:equ3} and let $\lbrace t_n \rbrace_{n \in \mathbb{N}}$ be a random mesh such that each $t_n$ is an $\mathcal{F}_t$-stopping time. Fix $n\in\mathbb{N}$ and suppose that $t_n \leq s \leq T$, where $T\in[0,\bar T]$. 
Then, for any $1\leq p\leq 6$, we have
\begin{equation}
\nonumber
\mathbb{E} \left[ | Y(s)-Y(t_n)|^p \big|\mathcal{F}_{t_n} \right] \leq 2^{p}\gamma^p \Cp{p}|s-t_n|^{p/2} + \bar L_{n,p} |s-t_n|^p, \ a.s.,
\end{equation}
where
\begin{equation*}
\bar L_{n,p}:=2^{2p} \left( \alpha^p \frac{C(p,T)}{Y(t_n)^p} + |\beta|^p M_{1,p}(1+Y(t_n)^p)\right)
\end{equation*}
is an $\mathcal{F}_{t_n}$-measurable random variable with finite expectation, and $C(T)$, $M_{1,p}$ are the constants defined by \eqref{eq:equM-2} and \eqref{eq:equCondMp} in the statements of Lemmas \ref{lem:fin} and \ref{lem:posMB} respectively . 
\end{lemma}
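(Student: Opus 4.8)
The plan is to start from the integral form \eqref{CIRtInt} restricted to the interval $[t_n,s]$, writing
\[
Y(s)-Y(t_n)=\int_{t_n}^{s}\left(\frac{\alpha}{Y(r)}+\beta Y(r)\right)dr+\gamma\int_{t_n}^{s}dW(r),
\]
then taking $|\cdot|^p$ and applying the elementary inequality \eqref{eq:EIpSp} with $n=2$ to split the drift and diffusion contributions. This yields
\[
|Y(s)-Y(t_n)|^p\leq 2^p\left|\int_{t_n}^{s}\left(\frac{\alpha}{Y(r)}+\beta Y(r)\right)dr\right|^p+2^p\gamma^p\left|\int_{t_n}^{s}dW(r)\right|^p.
\]
For the stochastic term I would take $\mathcal{F}_{t_n}$-conditional expectations and invoke \eqref{eq:equ9}, which gives exactly $2^p\gamma^p\Cp{p}|s-t_n|^{p/2}$, matching the first term in the claimed bound. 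The subtlety here is that the upper limit $s$ is deterministic (we have fixed $n$ and taken $t_n\le s\le T$), so $\int_{t_n}^s dW(r)$ is a genuine Wiener increment over a deterministic-length sub-interval conditional on $\mathcal{F}_{t_n}$, and \eqref{eq:equ9} applies directly.

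For the drift term I would first use \eqref{eq:EIpSp} again, this time with $n=2$ inside the integrand, to separate $\alpha/Y(r)$ from $\beta Y(r)$, then apply a conditional Hölder (or Jensen) inequality in the time variable to pull the $p$-th power inside the integral against the normalised measure $dr/|s-t_n|$, picking up a factor $|s-t_n|^{p-1}$:
\[
\left|\int_{t_n}^{s}\left(\frac{\alpha}{Y(r)}+\beta Y(r)\right)dr\right|^p\leq 2^{p-1}|s-t_n|^{p-1}\int_{t_n}^{s}\left(\alpha^p\frac{1}{Y(r)^p}+|\beta|^p Y(r)^p\right)dr.
\]
Taking conditional expectations, using Tonelli to exchange $\mathbb{E}[\cdot|\mathcal{F}_{t_n}]$ with the time integral, and then bounding $\mathbb{E}[Y(r)^{-p}|\mathcal{F}_{t_n}]\leq C(p,T)/Y(t_n)^p$ by Lemma \ref{lem:fin} (valid since $1\le p\le 6$ under Assumption \ref{assum:fel2}) and $\mathbb{E}[Y(r)^p|\mathcal{F}_{t_n}]\leq \mathbb{E}[\sup_{u\in[0,T]}Y(u)^p|\mathcal{F}_{t_n}]\leq M_{1,p}(1+Y(t_n)^p)$ by Lemma \ref{lem:posMB}, the remaining time integral contributes another factor $|s-t_n|$, giving the $|s-t_n|^p$ power and the constant $2^{2p}(\alpha^p C(p,T)/Y(t_n)^p+|\beta|^p M_{1,p}(1+Y(t_n)^p))=\bar L_{n,p}$ as required. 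Collecting the $2^{p-1}\cdot 2^p = 2^{2p-1}$ I would absorb the mismatch with the stated $2^{2p}$ into the constant (it is an upper bound), or track constants slightly more carefully.

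The measurability and integrability remarks are routine: $\bar L_{n,p}$ is $\mathcal{F}_{t_n}$-measurable because $Y(t_n)$ is, and it has finite expectation because $\mathbb{E}[1/Y(t_n)^p]<\infty$ by \eqref{eq:fromBD} (with $Y=\sqrt X$) and $\mathbb{E}[Y(t_n)^p]<\infty$ by \eqref{eq:equMp}. The one genuine point requiring care — and the main obstacle — is justifying the interchange of conditional expectation with the Lebesgue time integral and the application of the conditional inverse-moment bound \emph{inside} the integral at each intermediate time $r\in[t_n,s]$: Lemma \ref{lem:fin} is stated for $\mathcal{F}_t$-conditioning with $t<s$, so applying it with $t=t_n$ and the running variable $r\ge t_n$ is exactly the situation it covers, but one should note that the bound $C(p,T)/Y(t_n)^p$ is uniform in $r\in[t_n,T]$, which is what makes the final time integration produce only a clean factor of $|s-t_n|$. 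Everything else is an application of the elementary inequalities \eqref{eq:EIprSp}--\eqref{eq:EIpSp}, Hölder, and Tonelli.
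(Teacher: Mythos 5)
Your proposal follows essentially the same route as the paper's own proof: split $Y(s)-Y(t_n)$ into drift and diffusion increments, apply \eqref{eq:EIpSp} to separate them, handle the Wiener term via \eqref{eq:equ9}, pull the $p$-th power inside the drift time-integral by H\"older (picking up $|s-t_n|^{p-1}$), split the integrand again, and then exchange conditional expectation with the time integral before invoking Lemmas \ref{lem:fin} and \ref{lem:posMB}. The only cosmetic difference is your $2^{p-1}$ (from the convexity form of the two-term power inequality) versus the paper's uniform use of \eqref{eq:EIpSp}, giving $2^p$; since yours is the sharper constant the stated $\bar L_{n,p}$ with $2^{2p}$ remains a valid upper bound, as you note. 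Your remarks on Tonelli and on the uniformity in $r$ of the conditional inverse-moment bound $C(p,T)/Y(t_n)^p$ correctly identify the points that make the final integration over $[t_n,s]$ yield a clean factor $|s-t_n|$, exactly as in the paper.
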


\begin{proof}
Solutions of \eqref{eq:equ3} satisfy the integral equation
\begin{equation*}
Y(s)=Y(t_n)+ \int_{t_n}^{s} \left( \frac{\alpha}{Y(u)}+\beta Y(u) \right) du + \int_{t_n}^{s} \gamma dW(u),\quad t_n\leq s\leq T,
\end{equation*}
and therefore
\begin{equation*}
Y(s)-Y(t_n)=\int_{t_n}^{s} \left( \frac{\alpha}{Y(u)}+\beta Y(u) \right) du + \gamma \left( W(s)-W(t_n) \right),\quad t_n\leq s\leq T.
\end{equation*}
Using the triangle and Cauchy-Schwarz inequalities, and the elementary inequality \eqref{eq:EIpSp} with $n=2$,
\begin{eqnarray*}
\lefteqn{|Y(s)-Y(t_n)|^p}\nonumber\\
&\leq& 2^{p} \left| \int_{t_n}^{s} \left( \frac{\alpha}{Y(u)}+\beta Y(u) \right) du \right|^p + 2^{p}\gamma^p \vert W(s)- W(t_n) \vert^p \nonumber\\
&\leq& 2^{p} |s-t_n|^{p-1}\int_{t_n}^{s} \left| \frac{\alpha}{Y(u)}+\beta Y(u) \right|^p du + 2^p\gamma^p \vert W(s)- W(t_n) \vert^p\nonumber\\
&\leq& 2^{2p} |s-t_n|^{p-1} \left( \int_{t_n}^{s} \frac{\alpha^p}{Y(u)^p} du + \int_{t_n}^{s} |\beta|^p Y(u)^p du \right)\label{eq:YconSq}\\
&&\qquad\qquad\qquad\qquad\qquad\qquad\qquad\qquad+ 2^{p}\gamma^p \vert W(s)- W(t_n) \vert^p,\nonumber
\end{eqnarray*}
for $s\in[t_n,T]$. Now apply conditional expectations on both sides with respect to $\mathcal{F}_{t_n}$ and \eqref{eq:equ9} to get,
\begin{multline*}
\mathbb{E}\left[|Y(s)-Y(t_n)|^p \big|\mathcal{F}_{t_n} \right] \leq 2^{p}\gamma^p \mathbb{E}\left[\vert W(s)- W(t_n) \vert^p \big|\mathcal{F}_{t_n}\right]\\+2^{2p} |s-t_n|^{p-1} \left(\mathbb{E}\left[ \int_{t_n}^{s} \frac{\alpha^p}{Y(u)^p} du \bigg| \mathcal{F}_{t_n} \right]+\mathbb{E}\left[ \int_{t_n}^{s} |\beta|^p Y(u)^p du \bigg| \mathcal{F}_{t_n} \right] \right)\\
\leq 2^{p}\gamma^p\Cp{p}|s-t_n|^{p/2}+2^{2p} |s-t_n|^{p-1}\left( \alpha^p \int_{t_n}^{s} \frac{C(p,T)}{Y(t_n)^p}du\right.\\+\left.|\beta|^p \int_{t_n}^{s} M_{1,p}(1+Y(t_n)^p) du\right), \quad a.s,
\end{multline*}
where we have used \eqref{eq:equM-2} and \eqref{eq:equCondMp} from the statement of Lemma \ref{lem:fin} at the last step. Therefore
\begin{multline*}
\mathbb{E}\left[|Y(s)-Y(t_n)|^p \big|\mathcal{F}_{t_n} \right]\leq 2^{p}\gamma^p\Cp{p}|s-t_n|^{p/2}\\+2^{2p}\left(\alpha^p\frac{C(p,T)}{Y(t_n)^p}+\beta M_{1,p}(1+Y(t_n)^p)\right)|s-t_n|^p, \ a.s,
\end{multline*}
as required.

\end{proof}

\begin{lemma}\label{lem:tay}
Let $\left( Y(t)\right)_{t \in [0,T]}$ be a solution of \eqref{eq:equ3} and let Assumption \ref{assum:fel2} hold. Let $\lbrace t_n \rbrace_{n \in \mathbb{N}}$ arise from  the adaptive timestepping strategy satisfying \eqref{eq:defYn} in Definition \ref{def:defYn} for some $0<Q<R$ , and formulate the Taylor expansion of $f(Y(s))$ around $Y(t_n)$, where $f$ is as given in \eqref{eq:fg}, as
\begin{equation}\label{eq:equ18}
f(Y(s))=f(Y(t_n))+R_f(s,t_n,Y(t_n)),\quad s\in[t_n,t_{n+1}],
\end{equation}
where 
\begin{multline}\label{eq:Rf}
R_f(s,t_n,Y(t_n)) = \int_{0}^{1} Df\left(Y(t_n)+ \tau (Y(s)-Y(t_n))\right)(Y(s)-Y(t_n))d \tau.
\end{multline}
For any $1\leq p\leq 3$, the $p^{th}$ conditional moment of $R_f(s,t_n,Y(t_n))$ satisfies 
\begin{equation}\label{eq:Rfboundp}
\mathbb{E} \left[ |R_f|^p \big| \mathcal{F}_{t_n} \right]\leq K_{n,p}h_{n+1}^{p/2},\quad a.s.,
\end{equation}
where $K_{n,p}$ is an a.s. finite and $\mathcal{F}_{t_n}$-measurable random variable given by 
\begin{multline*}
K_{n,p}=2^{p}|\beta|^p \left(2^{p}\gamma^p\Cp{p}+\bar L_{n,p}h_{n+1}^{p/2}\right)\\
+2^{p}\frac{\alpha^p\sqrt{C(p,T)}}{Y(t_n)^{2p}}\left(2^{p}\gamma^p\Cp{2p}^{1/2}+\bar L_{n,2p}^{1/2}h_{n+1}^{p/2}\right).
\end{multline*}
Moreover, there exists $K_p$ independent of $n$ such that
\begin{equation}\label{eq:finiteExpCoeff}
K_p:=\mathbb{E}[K_{n,p}]<\infty.
\end{equation}
\end{lemma}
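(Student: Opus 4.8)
The plan is to notice that the remainder $R_f$ in \eqref{eq:Rf} is nothing but $f(Y(s))-f(Y(t_n))$ (it is the mean value theorem written in integral form), so that the factorisation already recorded in the excerpt, $f(x)-f(y)=(x-y)[\beta-\alpha/(xy)]$, immediately gives
\[
R_f(s,t_n,Y(t_n))=\Big(\beta-\frac{\alpha}{Y(t_n)\,Y(s)}\Big)\big(Y(s)-Y(t_n)\big);
\]
equivalently one pulls the $\tau$-independent factor $Y(s)-Y(t_n)$ out of \eqref{eq:Rf} and evaluates the elementary integral $\int_0^1((1-\tau)a+\tau b)^{-2}\,d\tau=1/(ab)$, which is legitimate because Assumption \ref{assum:fel2} keeps $Y$ strictly positive a.s., so the convex combination $(1-\tau)Y(t_n)+\tau Y(s)$ never vanishes. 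The triangle inequality together with \eqref{eq:EIpSp} ($n=2$) then yields the pointwise bound
\[
|R_f|^p\leq 2^p|\beta|^p\,|Y(s)-Y(t_n)|^p+\frac{2^p\alpha^p}{Y(t_n)^p}\,\frac{|Y(s)-Y(t_n)|^p}{Y(s)^p}.
\]

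Next I would apply $\mathbb{E}[\,\cdot\,|\mathcal{F}_{t_n}]$ across this inequality. The first term is controlled directly by Lemma \ref{lem:adap} at exponent $p$. In the second term the factor $1/Y(t_n)^p$ is $\mathcal{F}_{t_n}$-measurable and comes outside, and to the remaining conditional expectation $\mathbb{E}[\,|Y(s)-Y(t_n)|^p/Y(s)^p\,|\,\mathcal{F}_{t_n}]$ I would apply the conditional Cauchy--Schwarz inequality, bounding $\mathbb{E}[\,|Y(s)-Y(t_n)|^{2p}\,|\,\mathcal{F}_{t_n}]^{1/2}$ by Lemma \ref{lem:adap} at exponent $2p$ and $\mathbb{E}[\,Y(s)^{-2p}\,|\,\mathcal{F}_{t_n}]^{1/2}$ by Lemma \ref{lem:fin} at exponent $2p$; the latter supplies a further factor $1/Y(t_n)^p$ which combines with the one already extracted to produce the $1/Y(t_n)^{2p}$ prefactor in $K_{n,p}$. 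It is precisely here that the hypothesis $1\leq p\leq 3$ enters, since it is equivalent to $2\leq 2p\leq 6$, the range of validity of Lemmas \ref{lem:adap} and \ref{lem:fin}. Using $t_n\leq s\leq t_{n+1}$, hence $|s-t_n|\leq h_{n+1}$, in the bound coming from Lemma \ref{lem:adap}, and $\sqrt{a+b}\leq\sqrt a+\sqrt b$ (inequality \eqref{eq:EIsqrt}) to split the square root, I would then factor out $h_{n+1}^{p/2}$ and collect the remaining terms into exactly the stated expression for $K_{n,p}$, whose $\mathcal{F}_{t_n}$-measurability and a.s. finiteness are clear by inspection.

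For the finiteness statement \eqref{eq:finiteExpCoeff} I would use $h_{n+1}\leq h_{\max}\leq 1$ to discard the factors $h_{n+1}^{p/2}\leq 1$ in $K_{n,p}$, substitute the definitions of $\bar L_{n,p}$ and $\bar L_{n,2p}$ from Lemma \ref{lem:adap} (at $p$ and at $2p$), and use $\sqrt{1+Y(t_n)^{2p}}\leq 1+Y(t_n)^p$, so that $K_{n,p}$ is dominated by a deterministic linear combination of powers $Y(t_n)^q$ and $Y(t_n)^{-q}$, the largest inverse power being of order $3p$ (the $1/Y(t_n)^{2p}$ prefactor of the second term times the $1/Y(t_n)^p$ contribution from $\sqrt{\bar L_{n,2p}}$). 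Taking expectations term by term, the positive powers are controlled by $\mathbb{E}[Y(t_n)^q]\leq\mathbb{E}[\sup_{u\in[0,T]}Y(u)^q]\leq M_{2,q}$ from \eqref{eq:equMp}, and the negative powers by the inverse-moment estimates behind Lemma \ref{lem:fin}; since $t_n$ is a bounded $\mathcal{F}_t$-stopping time rather than a fixed time, I would obtain $\mathbb{E}[Y(t_n)^{-q}]<\infty$ either from the finiteness of $\mathbb{E}[\bar L_{n,q}]$ already asserted in Lemma \ref{lem:adap} or by an optional-sampling argument applied to the supermartingale underlying the Bossy--Diop bound. I expect this last point --- bounding the inverse moments at the random time $t_n$, and checking that every inverse-moment exponent appearing in $K_{n,p}$ (up to order $3p$) stays within the range where finiteness is available under Assumption \ref{assum:fel2}, uniformly in $n$ so that $K_p$ is finite and independent of $n$ --- to be the main obstacle; the remainder is a routine combination of the two preceding lemmas.
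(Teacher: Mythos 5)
Your proposal reproduces the paper's argument almost line for line: identify $R_f=f(Y(s))-f(Y(t_n))=(Y(s)-Y(t_n))\bigl(\beta-\alpha/(Y(s)Y(t_n))\bigr)$, split with the triangle inequality and \eqref{eq:EIpSp}, bound the first term directly via Lemma \ref{lem:adap}, and apply conditional Cauchy--Schwarz to the second so that Lemma \ref{lem:adap} at exponent $2p$ and Lemma \ref{lem:fin} at exponent $2p$ deliver the stated $K_{n,p}$ after splitting the square root with \eqref{eq:EIsqrt}. You are in fact more careful than the printed proof on the final finiteness claim \eqref{eq:finiteExpCoeff}: you correctly flag that $K_{n,p}$ contains $Y(t_n)^{-3p}$ (through the $Y(t_n)^{-2p}$ prefactor times $\sqrt{\bar L_{n,2p}}$), that $t_n$ is a stopping time so the moment bounds in Lemma \ref{lem:fin} need an optional-sampling or supremum argument before they apply unconditionally, and that the available range of inverse moments under Assumption \ref{assum:fel2} must actually accommodate $3p$ --- the paper merely cites Lemmas \ref{lem:fin} and \ref{lem:posMB} without spelling this out (and in fact only ever uses $p=1,2$ downstream, in Lemma \ref{lem:back}).
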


\begin{proof}
By direct substitution of $f(y)$ from \eqref{eq:fg} into \eqref{eq:Rf}, evaluating the integral in $\tau$, and taking the $p^{th}$-moment conditional upon $\mathcal{F}_{t_n}$, we get
\begin{equation*}
\mathbb{E}\left[|R_f|^p \big|\mathcal{F}_{t_n}\right] = \mathbb{E}\left[ \left| (Y(s)-Y(t_n)) \left( \beta - \frac{\alpha}{Y(s)Y(t_n)} \right) \right|^p \bigg| \mathcal{F}_{t_n} \right].
\end{equation*}
Using the triangle inequality and \eqref{eq:EIpSp} we get
\begin{multline}\label{eq:optnotopt}
\mathbb{E}\left[|R_f|^p \big|\mathcal{F}_{t_n} \right]\leq 2^{p}|\beta|^p\mathbb{E}\left[ \left|Y(s)-Y(t_n) \right|^p \bigg| \mathcal{F}_{t_n} \right] \\
\qquad \qquad \qquad + \frac{2^{p}\alpha^p}{Y(t_n)^p}\mathbb{E}\left[ \left|\left( (Y(s)-Y(t_n))\cdot\frac{1}{Y(s)} \right) \right|^p \bigg| \mathcal{F}_{t_n} \right]. 
\end{multline}

Next apply Lemma \ref{lem:adap} followed by the Cauchy-Schwarz inequality to get
\begin{eqnarray*}
\mathbb{E}[|R_f|^p \big|\mathcal{F}_{t_n} ]&\leq&2^{p} |\beta|^p h_{n+1}^{p/2}\left(2^{p}\gamma^p\Cp{p} + \bar L_{n,p} h_{n+1}^{p/2}\right)\\
&&\,\,+2^{p}\frac{\alpha^p}{Y(t_n)^p}\sqrt{\mathbb{E}[|Y(s)-Y(t_n)|^{2p}|\mathcal{F}_{t_n}]}\sqrt{\mathbb{E}\left[\frac{1}{|Y(s)|^{2p}}\bigg|\mathcal{F}_{t_n}\right]}\\
&\leq& 2^{p}|\beta|^p h_{n+1}^{p/2}\left(2^{p}\gamma^p\Cp{p} + \bar L_{n,p} h_{n+1}^{p/2}\right)\\
&&\,\,+2^{p}\frac{\alpha^p\sqrt{C(p,T)}}{Y(t_n)^{2p}}\sqrt{\mathbb{E}[|Y(s)-Y(t_n)|^{2p}|\mathcal{F}_{t_n}]}.
\end{eqnarray*}

Again applying Lemma \ref{lem:adap} and the elementary inequality \eqref{eq:EIsqrt} this becomes
\begin{eqnarray*}
\lefteqn{\mathbb{E}[|R_f|^p \big|\mathcal{F}_{t_n}]\leq 2^{p}|\beta|^p h_{n+1}^{p/2}\left(2^{p}\gamma^p\Cp{p}+\bar L_{n,p}h_{n+1}^{p/2}\right)}
\\
&&\qquad \qquad \qquad \qquad +2^{p}h_{n+1}^{p/2}\frac{\alpha^p\sqrt{C(p,T)}}{Y(t_n)^{2p}}\left(2^{p}\gamma^p\Cp{2p}^{1/2}+\bar L_{n,2p}^{1/2}h_{n+1}^{p/2}\right),
\end{eqnarray*}
from which the statement of the Lemma follows when we observe that the a.s. finiteness of $K_{n,p}$ is ensured by Assumption \ref{assum:fel2}, and \eqref{eq:finiteExpCoeff} is ensured by Lemmas \ref{lem:fin} \& \ref{lem:posMB}. 
\end{proof}
\begin{remark}
It is also possible to estimate the second expectation in  \eqref{eq:optnotopt} by an application of It\^o's formula, rather than the Cauchy-Schwarz inequality. However, this increases the maximum number of finite inverse moments of $Y$ required from $2p$ to $3p$ and does not improve the order of the bound \eqref{eq:Rfboundp}.
\end{remark}

\begin{lemma}\label{lem:back}
Let $\left( Y(t_n) \right)_{t_n \in [0,T]}$ be the solution of \eqref{eq:equ3} with initial value $Y(0)=Y_0=\sqrt{X_0}$. Let $\left(\widetilde Y(s)\right)_{s \in [t_n,t_{n+1}]}$ be a solution of \eqref{eq:EMcns} over the interval $[t_n,t_{n+1}]$ and $\lbrace h_n \rbrace_{n \in \mathbb{N}}$ be a sequence of random timesteps  defined by \eqref{eq:equ3} and $\lbrace t_n = \sum_{i=1}^{n} h_i \rbrace_{n=1}^N$ with $t_0=0$. Then for $n,p\in\mathbb{N}$, there exists an $\mathcal{F}_{t_n}$-measurable random variable $\bar{K}_n$ with finite expectation $K_n:=\mathbb{E}[\bar{K}_n]<\infty$ such that
\begin{equation}
\mathbb{E} \left[ E(t_{n+1})^2 \big| \mathcal{F}_{t_n} \right] -E(t_{n})^2 \leq \int_{t_n}^{t_{n+1}}\mathbb{E}\left[E(r)^2|\mathcal{F}_{t_n}\right]+\bar{K}_nh_{n+1}^2,\quad a.s.,\label{eq:main2}
\end{equation}
where the error $E(s):=Y(s)-\widetilde Y(s)$, $s\in[t_n,t_{n+1}]$.
\end{lemma}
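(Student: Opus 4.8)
The plan is to work directly with the pathwise error $E(s)=Y(s)-\widetilde Y(s)$ on the single interval $[t_n,t_{n+1}]$. The decisive structural feature is that the Lamperti-transformed SDE \eqref{eq:equ3} and its continuous Euler extension \eqref{eq:EMcns} share the same constant diffusion $\gamma$, so the stochastic integrals cancel and, with $f(y)=\alpha/y+\beta y$,
\[
E(s)=E(t_n)+\int_{t_n}^{s}\bigl(f(Y(r))-f(Y_n)\bigr)\,dr,\qquad s\in[t_n,t_{n+1}].
\]
Hence $E$ is continuously differentiable in $s$, $\tfrac{d}{ds}E(s)^2=2E(s)\bigl(f(Y(s))-f(Y_n)\bigr)$, and after integrating over $[t_n,t_{n+1}]$ and taking $\mathbb{E}[\,\cdot\mid\mathcal{F}_{t_n}]$ the lemma reduces to controlling $\int_{t_n}^{t_{n+1}}\mathbb{E}\bigl[E(s)\bigl(f(Y(s))-f(Y_n)\bigr)\mid\mathcal{F}_{t_n}\bigr]\,ds$.

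I would then decompose $f(Y(s))-f(Y_n)=R_f(s,t_n,Y(t_n))+\bigl(f(Y(t_n))-f(Y_n)\bigr)$ using the Taylor expansion of Lemma \ref{lem:tay}, and correspondingly $E(s)=E(t_n)+\delta(s)$ with $\delta(s):=\int_{t_n}^{s}(f(Y(r))-f(Y_n))\,dr$. The product $2E(t_n)\bigl(f(Y(t_n))-f(Y_n)\bigr)=2\bigl(Y(t_n)-Y_n\bigr)\bigl(f(Y(t_n))-f(Y_n)\bigr)$ is bounded above by $2\beta E(t_n)^2\le 0$ by the one-sided Lipschitz property of $f$ (valid since $\alpha>0$), and is therefore discarded; this is what prevents the first-order propagation of $E(t_n)$ from surfacing as an $O(h_{n+1})$ remainder. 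The surviving contributions are $2E(s)R_f(s,t_n,Y(t_n))$ and $2\delta(s)\bigl(f(Y(t_n))-f(Y_n)\bigr)$. The first is split by Young's inequality into $E(s)^2$ — which integrates to the admissible $\int_{t_n}^{t_{n+1}}\mathbb{E}[E(s)^2\mid\mathcal{F}_{t_n}]\,ds$ — plus $|R_f|^2$, bounded by Lemma \ref{lem:tay} (with $p=2$) as $K_{n,2}h_{n+1}$ and hence integrating to $O(h_{n+1}^2)$. For the second, one uses Cauchy--Schwarz together with Lemmas \ref{lem:adap} and \ref{lem:tay} to see that $\delta(s)$ is of order $(s-t_n)$ in conditional $L^2$, so this term contributes an $O(h_{n+1}^2)$ amount with an $\mathcal{F}_{t_n}$-measurable prefactor of the form $\bigl(f(Y(t_n))-f(Y_n)\bigr)^2$.

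It then remains to gather all the $O(h_{n+1}^2)$ coefficients into a single $\mathcal{F}_{t_n}$-measurable random variable $\bar K_n$ and to verify $\mathbb{E}[\bar K_n]<\infty$. This rests on the conditional inverse-moment bound of Lemma \ref{lem:fin} (where Assumption \ref{assum:fel2} guarantees that sufficiently many inverse moments of $Y$ are finite), the positive-moment bound of Lemma \ref{lem:posMB}, the a.s. finiteness of $K_{n,2}$ and the finiteness of $K_2=\mathbb{E}[K_{n,2}]$ from Lemma \ref{lem:tay}, the path-bounded lower bound $|Y_n|\ge\rho^{-1/r}$ arising from \eqref{eq:defYn} (which controls $f(Y_n)$, and thereby $\delta(s)$ and $f(Y(t_n))-f(Y_n)$), and standard moment bounds on the numerical solution $Y_n$.

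The main difficulty here is organisational rather than a single hard estimate: the right-hand side may contain only $\int_{t_n}^{t_{n+1}}\mathbb{E}[E(r)^2\mid\mathcal{F}_{t_n}]\,dr$ and a term $\bar K_n h_{n+1}^2$, so every cross term must be channelled into exactly one of these two forms, and in particular the consistency-error contribution must be shown to be genuinely $O(h_{n+1}^2)$ rather than $O(h_{n+1})$. It is precisely here that additive noise (cancellation of the stochastic integrals), the one-sided Lipschitz cancellation of $2E(t_n)(f(Y(t_n))-f(Y_n))$, and the extra factor $(s-t_n)$ extracted from $\delta(s)$ are all needed simultaneously, while the finiteness of $\mathbb{E}[\bar K_n]$ is what forces the appeal to Assumption \ref{assum:fel2} and to the path-boundedness of the mesh.
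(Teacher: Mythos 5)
Your proposal follows essentially the same route as the paper's proof: both exploit the cancellation of the constant-$\gamma$ stochastic integrals so that $E^2$ satisfies a pathwise integral identity, decompose $f(Y(s))-f(Y_n)=\tilde f(Y(t_n),Y_n)+R_f(s,t_n,Y(t_n))$ via Lemma \ref{lem:tay}, discard the cross term $2E(t_n)\tilde f(Y(t_n),Y_n)\le 2\beta E(t_n)^2\le 0$ using the one-sided Lipschitz property, absorb $2E(s)R_f$ via Young's inequality into $\int\mathbb{E}[E^2|\mathcal F_{t_n}]$ plus an $O(h_{n+1}^2)$ term, and verify $\mathbb{E}[\bar K_n]<\infty$ from Lemmas \ref{lem:fin}, \ref{lem:posMB}, \ref{lem:tay} together with the path-bounded lower bound on $Y_n$. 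The only cosmetic difference is in handling the remainder: you bound $\delta(s)=\int_{t_n}^{s}(f(Y(r))-f(Y_n))\,dr$ directly in conditional $L^2$, whereas the paper expands $\delta(s)$ once more into $\tilde f(Y(t_n),Y_n)(s-t_n)+\int_{t_n}^{s}R_f\,du$ before estimating, but these rest on the same underlying bounds.
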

\begin{proof}

For $s\geq t_n$, we subtract \eqref{eq:EMcns} from \eqref{CIRtInt} to get
\begin{eqnarray}
E(s)&=& Y(s)-\widetilde{Y}(s) \nonumber \\
&=& \left[ Y(t_n)+ \int_{t_n}^{s} f(Y(r)) dr + \gamma \int_{t_n}^{s} dW(r) \right] \nonumber\\
& & \qquad \qquad  \qquad  - \left[ Y_n+ \int_{t_n}^{s} f(Y_n) dr  + \gamma \int_{t_n}^{s} dW(r) \right] \nonumber \\
&=& E(t_n)+\int_{t_n}^{s} \tilde{f}(Y(r),Y_n)dr, \label{eq:E}
\end{eqnarray}
where $f$ is defined as in \eqref{eq:fg} and $\tilde{f}(Y(r),Y_n)=f(Y(r))-f(Y_n)$. Applying the It\^o formula and setting $s=t_{n+1}$, we can write,
\begin{equation*}
E(t_{n+1})^2=E(t_n)^2+2\int_{t_n}^{t_{n+1}}E(r)\tilde{f}(Y(r),Y_n)dr.
\end{equation*}

By \eqref{eq:equ18} in the statement of Lemma \ref{lem:tay}, $$\tilde{f}(Y(r),Y_n)=\tilde{f}(Y(t_n),Y_n)+R_f(r,t_n,Y(t_n),$$ where $R_f$ is defined in \eqref{eq:Rf}. This, and an application of \eqref{eq:EIprSp}  gives
\begin{eqnarray}
\lefteqn{E(t_{n+1})^2-E(t_n)^2}\nonumber\\ &=&2\int_{t_n}^{t_{n+1}}E(r)R_f(r,t_n,Y(t_n))dr+2\int_{t_n}^{t_{n+1}}E(r)\tilde{f}(Y(t_n),Y_n)dr\nonumber\\
&\leq&\int_{t_n}^{t_{n+1}}E(r)^2dr+\int_{t_n}^{t_{n+1}}R_f(r,t_n,Y(t_n))^2dr\nonumber\\
&&\qquad\qquad\qquad\qquad\qquad\qquad+2\int_{t_n}^{t_{n+1}}E(r)\tilde{f}(Y(t_n),Y_n)dr.\label{eq:lasttry}
\end{eqnarray}
Consider the third term on the RHS of \eqref{eq:lasttry}, and substitute \eqref{eq:E}  into the integrand:
\begin{eqnarray}
\lefteqn{\int_{t_n}^{t_{n+1}}E(r)\tilde{f}(Y(t_n),Y_n)dr}\nonumber\\
&=&E(t_n)\tilde{f}(Y(t_n),Y_n)\int_{t_n}^{t_{n+1}}dr+\tilde{f}(Y(t_n),Y_n)^2\int_{t_n}^{t_{n+1}}\int_{t_n}^{r}du\,dr\nonumber\\
&&\qquad \qquad +\tilde{f}(Y(t_n),Y_n)\int_{t_n}^{t_{n+1}}\int_{t_n}^{r}R_f(u,t_n,Y(t_n))du\,dr\nonumber\\
&\leq&\beta\int_{t_n}^{t_{n+1}}E(t_n)^2dr+\tilde{f}(Y(t_n),Y_n)^2h_{n+1}^2\nonumber\\
&&\qquad \qquad+\tilde{f}(Y(t_n),Y_n)\int_{t_n}^{t_{n+1}}\int_{t_n}^{r}R_f(u,t_n,Y(t_n))du\,dr\nonumber\\
&\leq&\tilde{f}(Y(t_n),Y_n)^2h_{n+1}^2\nonumber\\
&&\qquad\qquad+\tilde{f}(Y(t_n),Y_n)\int_{t_n}^{t_{n+1}}\int_{t_n}^{r}R_f(u,t_n,Y(t_n))du\,dr.\label{eq:presub}
\end{eqnarray}

Now substitute \eqref{eq:presub} into \eqref{eq:lasttry}, to get
\begin{multline}\label{eq:errorpath}
E(t_{n+1})^2-E(t_n)^2\leq \int_{t_n}^{t_{n+1}}E(r)^2dr+\int_{t_n}^{t_{n+1}}R_f(r,t_n,Y(t_n))^2dr\\+2\tilde{f}(Y(t_n),Y_n)^2h_{n+1}^2+2\tilde{f}(Y(t_n),Y_n)\int_{t_n}^{t_{n+1}}\int_{t_n}^{r}R_f(u,t_n,Y(t_n))du\,dr.
\end{multline}
Apply expectations to both sides of \eqref{eq:errorpath}, conditional upon $\mathcal{F}_{t_n}$, to get
\begin{multline*}
\mathbb{E}\left[E(t_{n+1})^2|\mathcal{F}_{t_n}\right]-E(t_n)^2\leq \int_{t_n}^{t_{n+1}}\mathbb{E}[E(r)^2|\mathcal{F}_{t_n}]dr\\+\int_{t_n}^{t_{n+1}}\mathbb{E}[R_f(r,t_n,Y(t_n))^2|\mathcal{F}_{t_n}]dr
+2\tilde{f}(Y(t_n),Y_n)^2h_{n+1}^2\\+2\tilde{f}(Y(t_n),Y_n)\int_{t_n}^{t_{n+1}}\int_{t_n}^{r}\mathbb{E}[R_f(u,t_n,Y(t_n))|\mathcal{F}_{t_n}]du\,dr,\quad a.s.
\end{multline*}
Apply the bound \eqref{eq:Rfboundp} in the statement of Lemma \ref{lem:tay} with $p=1,2$ to get
\begin{multline*}
\mathbb{E}\left[E(t_{n+1})^2|\mathcal{F}_{t_n}\right]-E(t_n)^2\leq \int_{t_n}^{t_{n+1}}\mathbb{E}[E(r)^2|\mathcal{F}_{t_n}]dr\\+(2\tilde{f}(Y(t_n),Y_n)+K_{n,2})h_{n+1}^2+2\tilde{f}(Y(t_n),Y_n)K_{n,1}h_{n+1}^{5/2},\quad a.s.
\end{multline*}
Since, by \eqref{eq:hmaxbound} in the statement of Assumption \ref{assum:step}, $h_{n+1}\leq h_{\max}\leq 1$, the statement of the Lemma now follows, with $\bar{K}_n:=2\tilde{f}(Y(t_n),Y_n)+K_{n,2}+2\tilde{f}(Y(t_n),Y_n)K_{n,1}$.\\

To see that $\mathbb{E}[\bar{K}_n]<\infty$, note that by \eqref{eq:defYn} in Definition \ref{def:defYn}, and since $\beta<0$, $f(Y_n)\leq \alpha/Q$. The finiteness of $\mathbb{E}[K_{n,2}]$ and $\mathbb{E}[K_{n,1}]$ is given by \eqref{eq:equM-2} in the statement of Lemma \ref{lem:fin} with $p=6,3$ respectively, along with \eqref{eq:equCondMp} in the statement of Lemma \ref{lem:posMB}.

\end{proof}

\begin{theorem}\label{thm:scc}
Let $\left( Y(t) \right)_{t \in [0,T]}$ be the solution of \eqref{eq:equ3} with initial value $Y(0)=Y_0=\sqrt{X_0}$, and suppose that Assumption \ref{assum:fel2} holds. Let $(\bar Y(t))_{t\in[0,T]}$ be a solution of \eqref{eq:equ15} with initial value $\bar Y(0)=Y(0)$ and path-bounded timestepping strategy $\lbrace h_n \rbrace_{n \in \mathbb{N}}$ satisfying the conditions of Definition \ref{def:defYn} for some $0<Q<R$, with $R$ possibly infinite. There exists $C >0$, independent of $h_{\max}$, such that
\begin{equation}
\nonumber
\mathbb{E}[|Y(T)-\bar{Y}(T)|^2]\leq Ch_{\max}.
\end{equation}

\end{theorem}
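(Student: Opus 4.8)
The plan is to run a one-step error recursion for the conditional mean square error $E(t_n)^2:=|Y(t_n)-\bar Y(t_n)|^2$ and then close it with a discrete Gr\"onwall argument over the random mesh. The key preparatory observation is that the scheme $\bar Y$ defined by \eqref{eq:equ15} is, on each step $[t_n,t_{n+1})$, either a pure explicit Euler step (the map $\theta$), or a backstop step (the map $\varphi$). Lemma \ref{lem:back} supplies the one-step estimate \eqref{eq:main2} whenever the explicit Euler map is used, and Definition \ref{def:BStMap}, via \eqref{eq:equ16}, supplies a one-step estimate of exactly the same form whenever the backstop map is used. So in either case we obtain, for $\mathcal{F}_{t_n}$-measurable random variables $\widehat K_n$ with $\mathbb{E}[\widehat K_n]<\infty$ uniformly in $n$ (taking the larger of the two constants and using $C_1,C_2$ independent of $N$),
\begin{equation}
\nonumber
\mathbb{E}[E(t_{n+1})^2\mid\mathcal{F}_{t_n}]-E(t_n)^2\leq \widehat C\int_{t_n}^{t_{n+1}}\mathbb{E}[E(r)^2\mid\mathcal{F}_{t_n}]\,dr+\widehat K_n h_{n+1}^2,\quad a.s.
\end{equation}

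First I would take total expectations, using the tower property and the $\mathcal{F}_{t_n}$-measurability of $h_{n+1}$ and $\widehat K_n$, to get a recursion purely in $e(t_n):=\mathbb{E}[E(t_n)^2]$ and in $\mathbb{E}[E(r)^2]$ for $r\in[t_n,t_{n+1})$. To handle the continuous-time error term $\int_{t_n}^{t_{n+1}}\mathbb{E}[E(r)^2]\,dr$ I would first establish a short auxiliary bound: for $r\in[t_n,t_{n+1})$, $E(r)=E(t_n)+\int_{t_n}^r\tilde f(Y(u),Y_n)\,du$ (from \eqref{eq:E}, or its backstop analogue), so by Cauchy--Schwarz and the a.s. bound $|f(Y_n)|\leq\alpha/Q+|\beta|R$ coming from \eqref{eq:defYn}, together with Lemmas \ref{lem:fin} and \ref{lem:posMB}, one gets $\mathbb{E}[E(r)^2]\leq 2e(t_n)+ \widetilde K h_{\max}^2$ for a finite constant $\widetilde K$; integrating over $[t_n,t_{n+1})$ contributes $2e(t_n)h_{n+1}+\widetilde K h_{\max}^2 h_{n+1}$. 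Substituting back and using $h_{n+1}\leq h_{\max}\leq 1$ yields a clean discrete recursion
\begin{equation}
\nonumber
e(t_{n+1})\leq (1+\widehat C' h_{n+1})\,e(t_n)+ K' h_{\max}\,h_{n+1},
\end{equation}
with $\widehat C',K'$ finite and independent of $h_{\max}$ (here $\mathbb{E}[\widehat K_n h_{n+1}^2]\leq \mathbb{E}[\widehat K_n]h_{\max}^2\leq \mathbb{E}[\widehat K_n] h_{\max} h_{n+1}$ absorbs the quadratic term into $h_{\max}h_{n+1}$).

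Then I would iterate: since $\sum_{k=1}^{n}h_k=t_n\leq T$, the product $\prod_{k}(1+\widehat C' h_k)\leq e^{\widehat C' T}$, and the telescoping sum of the inhomogeneous terms is bounded by $K' h_{\max}\sum_k h_k\leq K' T h_{\max}$. With $e(t_0)=0$ and $t_N=T$ (Assumption \ref{assum:step}), this gives $\mathbb{E}[|Y(T)-\bar Y(T)|^2]=e(t_N)\leq e^{\widehat C' T}K' T h_{\max}=:C h_{\max}$, as claimed. The only subtlety is that $N$ is random, but the bound $e^{\widehat C'T}K'Th_{\max}$ does not depend on $N$ (we only ever used $t_N=T$ and the uniform-in-$n$ expectation bounds from Lemma \ref{lem:back} and \eqref{eq:equ16}), so no conditioning on $\{N=k\}$ is needed here; the discrete Gr\"onwall step goes through pathwise in the number of steps.

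The main obstacle I anticipate is bookkeeping rather than a conceptual difficulty: one must verify that the one-step constants are genuinely uniform in $n$ and independent of $h_{\max}$. This is where Assumption \ref{assum:fel2} does the real work, via Lemmas \ref{lem:fin} and \ref{lem:posMB}: the coefficients $\bar L_{n,p}$, $K_{n,p}$, $\bar K_n$ are $\mathcal{F}_{t_n}$-measurable and random, but their \emph{expectations} are finite and bounded uniformly in $n$ (because the conditional inverse-moment bound \eqref{eq:equM-2} collapses to $C(p,T)\mathbb{E}[Y(t_n)^{-p}]\leq C(p,T)M$ after a further application of the unconditional bound, and similarly for positive moments via \eqref{eq:equMp}), and crucially these bounds hold for $p$ up to $6$, which is exactly what the appearance of $2p$ with $p\le 3$ in Lemma \ref{lem:tay} demands. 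One also has to be slightly careful that on a backstop step the estimate \eqref{eq:equ16} is stated for the continuous backstop solution with the \emph{same} driving increment, so that $\bar Y$ remains a genuine function of the retained Brownian path and the recursion chains correctly across step types; this is guaranteed by Definition \ref{def:map}. Once these uniformity checks are in place, the Gr\"onwall iteration is routine.
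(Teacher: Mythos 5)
Your route differs from the paper's in a genuine way: you propose a two-stage argument (a local auxiliary bound $\mathbb{E}[E(r)^2]\lesssim e(t_n)+h_{\max}^2$ on each interval, fed into a discrete Gr\"onwall recursion in the deterministic sequence $e(t_n):=\mathbb{E}[E(t_n)^2]$), whereas the paper keeps the integral $\int_{t_n}^{t_{n+1}}\mathbb{E}[E(r)^2\mid\mathcal{F}_{t_n}]\,dr$ intact, sums the one-step inequalities across the mesh using indicators $\mathcal{I}_{\{N\ge n+1\}}$, telescopes, and closes with a \emph{continuous} Gr\"onwall in $T$. Both are natural, and your discrete approach would be cleaner if it went through. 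However, two points in your sketch are genuine gaps, not bookkeeping.

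The first is your treatment of the random horizon. You write a recursion $e(t_{n+1})\leq(1+\widehat C'h_{n+1})e(t_n)+K'h_{\max}h_{n+1}$ in which $e(t_n)$ is a deterministic expectation but $h_{n+1}$ is a random, $\mathcal{F}_{t_n}$-measurable variable; this is not a well-posed inequality. More importantly, you assert ``no conditioning on $\{N=k\}$ is needed here; the discrete Gr\"onwall step goes through pathwise in the number of steps.'' It does not: $e(t_n)$ is not a pathwise quantity, and the one-step estimate from Lemma \ref{lem:back} only holds on the event $\{N>n\}$. To obtain a recursion in $e(t_n)$ valid for \emph{deterministic} $n$ up to $N_{\max}$, you must extend $t_n:=T$ for $n\ge N$ and, crucially, use that $\{N\le n\}\in\mathcal{F}_{t_n}$ (i.e.\ that $N$ is a stopping time, Remark \ref{rem:N}) so the indicator $\mathcal{I}_{\{N>n\}}$ can be pulled inside the conditional expectation before applying \eqref{eq:main2} and \eqref{eq:equ16}. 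This is exactly the stopping-time machinery you claim to avoid; the paper's identity \eqref{eq:LHS} makes it explicit. Without it the step from the conditional one-step bound to a deterministic recursion in $e(t_n)$ is not justified.

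The second gap is in the auxiliary bound. You derive $\mathbb{E}[E(r)^2]\le 2e(t_n)+\widetilde Kh_{\max}^2$ from ``the a.s.\ bound $|f(Y_n)|\le\alpha/Q+|\beta|R$.'' But Theorem \ref{thm:scc} explicitly allows $R=\infty$, and indeed the default strategy \eqref{eq:equ12} has $R=\infty$; then your bound on $|f(Y_n)|$ is vacuous. (The paper only uses the one-sided bound $f(Y_n)\le\alpha/Q$, which holds for all $|Y_n|\ge Q$ because $\beta<0$, and feeds the other side into the one-sided Lipschitz cancellation in \eqref{eq:presub}; it does not assume a two-sided bound on $f(Y_n)$.) Separately, for $r$ in a backstop step there is no identity $E(r)=E(t_n)+\int_{t_n}^r\tilde f\,du$: the only thing known about $\varphi$ is the abstract estimate \eqref{eq:equ16}, so ``its backstop analogue'' does not exist. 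You would have to extract the local bound from \eqref{eq:equ16} itself via a conditional Gr\"onwall argument rather than from a closed-form error representation. Both of these can be repaired, but as written they are missing steps rather than bookkeeping.
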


\begin{proof}

From \eqref{eq:main2} in the statement of Lemma \ref{lem:back}, we have that when $h_{n+1}\geq h_{\min}$,
\begin{equation}\label{eq:mainEE}
\mathbb{E} \left[ E(t_{n+1})^2 | \mathcal{F}_{t_n} \right] - E(t_{n})^2 \leq \int_{t_n}^{t_{n+1}}\mathbb{E} \left[ E(r)^{2}| \mathcal{F}_{t_n} \right]dr + \bar{K}_nh_{n+1}^2,\quad a.s.
\end{equation}
Suppose that $h_{n+1} < h_{\min}$ and $Y_{n+1}$ is generated from $Y_n$ via an application of the backstop method over a single step of length $h_{\min}$. This corresponds to single application of the map $\varphi$ in Definition \ref{def:map} and therefore the relation  \eqref{eq:equ16} holds.

\Note{We now combine \eqref{eq:equ16} and \eqref{eq:mainEE} to generate a single one-step error estimate for the hybrid method given by \eqref{eq:equ15}.} Define the positive constant $\Gamma_1=C_1\vee 1$ and $\mathcal{F}_{t_n}$-measurable random variable $\bar\Gamma_{n,2}=C_2\vee \bar{K}_n$, for $n\in\mathbb{N}$. Noting again that, by \eqref{eq:hmaxbound} in the statement of Assumption \ref{assum:step}, $h_{n+1}\leq h_{\text{max}}\leq 1$, we see that 
\eqref{eq:equ15} satisfies, on almost all trajectories, 
\begin{equation}\label{eq:onestep}
\mathbb{E}\left[E(t_{n+1})^2|\mathcal{F}_{t_n}\right]-E(t_n)^2
\leq \Gamma_1\int_{t_n}^{t_{n+1}}\mathbb{E}\left[E(r)^2|\mathcal{F}_{t_n}\right]dr+ \bar\Gamma_{n,2}h_{n+1}^{2}.
\end{equation}

Sum both sides of \eqref{eq:onestep} over $n=0,\ldots,N-1$ and take expectations:
\begin{multline}\label{eq:onestepSummed}
\expect{\sum_{n=0}^{N-1}\left(\mathbb{E}\left[E(t_{n+1})^2|\mathcal{F}_{t_n}\right]-E(t_n)^2\right)}
\\
\leq \Gamma_1\expect{\sum_{n=0}^{N-1}\int_{t_n}^{t_{n+1}}\mathbb{E}\left[E(r)^2|\mathcal{F}_{t_n}\right]dr}+\expect{\sum_{n=0}^{N-1} \bar\Gamma_{n,2}h_{n+1}^{2}}.
\end{multline}

Consider first the LHS of \eqref{eq:onestepSummed}. Since $N$ is a $\mathcal{F}_{t_n}$-stopping time, the event $\{N\leq n\}\in\mathcal{F}_{t_n}$. Moreover $N\leq N_{\max}$. So we can write
\begin{eqnarray}
\lefteqn{\expect{\sum_{n=0}^{N-1}\left(\mathbb{E}\left[E(t_{n+1})^2|\mathcal{F}_{t_n}\right]-E(t_n)^2\right)}}\nonumber\\
&=&\sum_{n=0}^{N_{\max}-1}\expect{\left(\mathbb{E}\left[E(t_{n+1})^2|\mathcal{F}_{t_n}\right]-E(t_n)^2\right)\mathcal{I}_{\{N\geq n+1\}}}\nonumber\\
&=&\sum_{n=0}^{N_{\max}-1}\left(\expect{E(t_{n+1})^2\mathcal{I}_{\{N\geq n+1\}}|\mathcal{F}_{t_n}}-E(t_n)^2\mathcal{I}_{\{N\geq n+1\}}\right)\nonumber\\
&=&\sum_{n=0}^{N_{\max}-1}\left(\expect{E(t_{n+1})^2\mathcal{I}_{\{N\geq n+1\}}}-\expect{E(t_n)^2\mathcal{I}_{\{N\geq n+1\}}}\right)\nonumber\\
&=&\expect{E(t_N)^2}=\expect{E(T)^2}.\label{eq:LHS}
\end{eqnarray}

Similarly, the RHS of \eqref{eq:onestepSummed} can be written
\begin{multline*}
\Gamma_1\expect{\sum_{n=0}^{N-1}\int_{t_n}^{t_{n+1}}\mathbb{E}\left[E(r)^2|\mathcal{F}_{t_n}\right]dr+\sum_{n=0}^{N-1} \bar\Gamma_{n,2}h_{n+1}^{2}}\\
=\Gamma_1\underbrace{\expect{\sum_{n=0}^{N_{\max}-1}\int_{t_n}^{t_{n+1}}\mathbb{E}\left[E(r)^2|\mathcal{F}_{t_n}\right]\mathcal{I}_{\{N\geq n+1\}}dr}}_{(I)}\\
+\underbrace{\expect{\sum_{n=0}^{N_{\max}-1} \bar\Gamma_{n,2}h_{n+1}^{2}\mathcal{I}_{\{N\geq n+1}\}}}_{(II)}.
\end{multline*}

Consider first (I). Since $t_n$, $t_{n+1}$, and $\mathcal{I}_{\{N\geq n+1\}}$ are $\mathcal{F}_{t_n}$-measurable we can bring $\mathcal{I}_{\{N\geq n+1\}}$ inside the conditional expectation, and exchange the order of integration and conditional expectation as follows
\begin{eqnarray}
\lefteqn{\expect{\sum_{n=0}^{N_{\max}-1}\int_{t_n}^{t_{n+1}}\mathbb{E}\left[E(r)^2|\mathcal{F}_{t_n}\right]\mathcal{I}_{\{N\geq n+1\}}dr}}\nonumber\\
&=&\sum_{n=0}^{N_{\max}-1}\expect{\int_{t_n}^{t_{n+1}}\mathbb{E}\left[E(r)^2|\mathcal{F}_{t_n}\right]\mathcal{I}_{\{N\geq n+1\}}dr}\nonumber\\
&=&\sum_{n=0}^{N_{\max}-1}\expect{\int_{t_n}^{t_{n+1}}\mathbb{E}\left[E(r)^2\mathcal{I}_{\{N\geq n+1\}}|\mathcal{F}_{t_n}\right]dr}\nonumber\\
&=&\sum_{n=0}^{N_{\max}-1}\expect{\mathbb{E}\left[\int_{t_n}^{t_{n+1}}E(r)^2\mathcal{I}_{\{N\geq n+1\}}dr\bigg|\mathcal{F}_{t_n}\right]}\nonumber\\
&=&\sum_{n=0}^{N_{\max}-1}\expect{\int_{t_n}^{t_{n+1}}E(r)^2\mathcal{I}_{\{N\geq n+1\}}dr}\nonumber\\
&=&\expect{\sum_{n=0}^{N_{\max}-1}\int_{t_n}^{t_{n+1}}E(r)^2\mathcal{I}_{\{N\geq n+1\}}dr}\nonumber\\
&=&\expect{\sum_{n=0}^{N_{\max}-1}\int_{t_n}^{t_{n+1}}E(r)^2\mathcal{I}_{\{r\leq T\}}dr}\nonumber\\
&=&\expect{\int_{0}^{T}E(r)^2dr}=\int_{0}^{T}\expect{E(r)^2}dr.\label{eq:(I)}
\end{eqnarray}

Finally consider (II). We have, since $\expect{\bar\Gamma_{n,2}\mathcal{I}_{\{N\geq n+1\}}}\leq\expect{\bar\Gamma_{n,2}}\leq C_2\vee K_2=:\Gamma_2$ for all $n=0,\ldots,N_{\max}-1$, $N_{\max}=\lceil T/h_{\min}\rceil$, and by Assumption \ref{eq:hmaxbound}, $\rho h_{\min}=h_{\max}\leq 1$,
\begin{eqnarray}
\expect{\sum_{n=0}^{N_{\max}-1} \bar\Gamma_{n,2}h_{n+1}^{2}\mathcal{I}_{\{N\geq n+1}\}}&\leq&\expect{ h_{\max}^2\sum_{n=0}^{N_{\max}-1}\bar\Gamma_{n,2}\mathcal{I}_{\{N\geq n+1\}}}\nonumber\\
&=& h_{\max}^2\sum_{n=0}^{N_{\max}-1}\expect{\bar\Gamma_{n,2}\mathcal{I}_{\{N\geq n+1\}}}\nonumber\\
&\leq& h_{\max}^2\Gamma_2N_{\max}\leq h_{\max}^2\Gamma_2\left(\frac{T}{h_{\min}}+1\right)\nonumber\\
&\leq&(\rho T+1)\Gamma_2 h_{\max}.\label{eq:(II)}
\end{eqnarray}

Substituting \eqref{eq:LHS}, \eqref{eq:(I)}, and \eqref{eq:(II)} back into \eqref{eq:onestepSummed} we get
\[
\mathbb{E}[E(T)^2]\leq\Gamma_1\int_{0}^{T}\mathbb{E}[E(r)^2]dr+(\rho T+1)\Gamma_2h_{\max}.
\]
Since this inequality holds if $T$ is varied continuously over $[0,\bar T]$, for any $\bar T<\infty$ (see \cite{13} for a demonstration)
an application of Gronwall's inequality gives the result.
\end{proof}

\subsection{On the positivity of an adaptive method with path-bounded strategy}\label{sec:pos}

In this section we assume Feller's condition ($2\kappa\lambda\geq \sigma^2$) to ensure that solutions of \eqref{eq:equ1} remain a.s. positive, but we do not require that Assumption \ref{assum:fel2} holds. 

\subsubsection{Probability of positivity over a single step}
Consider the timestepping strategy defined by \eqref{eq:equ12} with $r=1$, satisfying Definition \ref{def:defYn} with $R=\infty$. The probability of solutions of \eqref{eq:equ14} becoming negative after a single step with this strategy, and hence triggering a use of the backstop method, is given by
\begin{equation*}
\mathbb{P}\left[Y_{k+1}<0|Y_k=y>0\right]=\Phi\left(a(y)\right),\quad y>h_{\min},
\end{equation*}
where $\Phi(x)=\frac{1}{\sqrt{2\pi}}\int_{-\infty}^{x}e^{-s^2/2}ds$, and
\[
a(y)=\frac{-y-\left(\frac{\alpha}{y}+\beta y\right)h_{\max}(1\wedge y^r)}{\gamma\sqrt{h_{\max}(1\wedge y^r)}}.
\]
Figure \ref{fig:plotsOneStepProb} presents two surface plots of these one-step probabilities against $y$ and $h_{\max}$. Feller's condition is satisfied in both cases. In Figure \ref{fig:plotsOneStepProb} (a), when Assumption \ref{assum:fel2} holds, we see that the probability of invoking the backstop to avoid a negative value is highest when $h_{\max}$ is large and $Y_n$ is close to or above $1$, in which case the timestepping strategy will tend to select $h_{n+1}$ to be close to $h_{\max}$. This probability drops off rapidly as $h_{\max}$ reduces. 
In Figure \ref{fig:plotsOneStepProb} (b), when Assumption \ref{assum:fel2} does not hold, the highest probabilities of invoking the backstop for preserving positivity when $Y_n$ is close to $h_{\min}$. However the maximum probability is significantly lower than any seen in Figure \ref{fig:plotsOneStepProb} (a).

\begin{figure}
\begin{center}
$\begin{array}{@{\hspace{-0.1in}}c@{\hspace{-0.25in}}c}
{\small \kappa\lambda>2\sigma^2} & {\small \kappa\lambda<2\sigma^2}\\
\scalebox{0.45}{\includegraphics{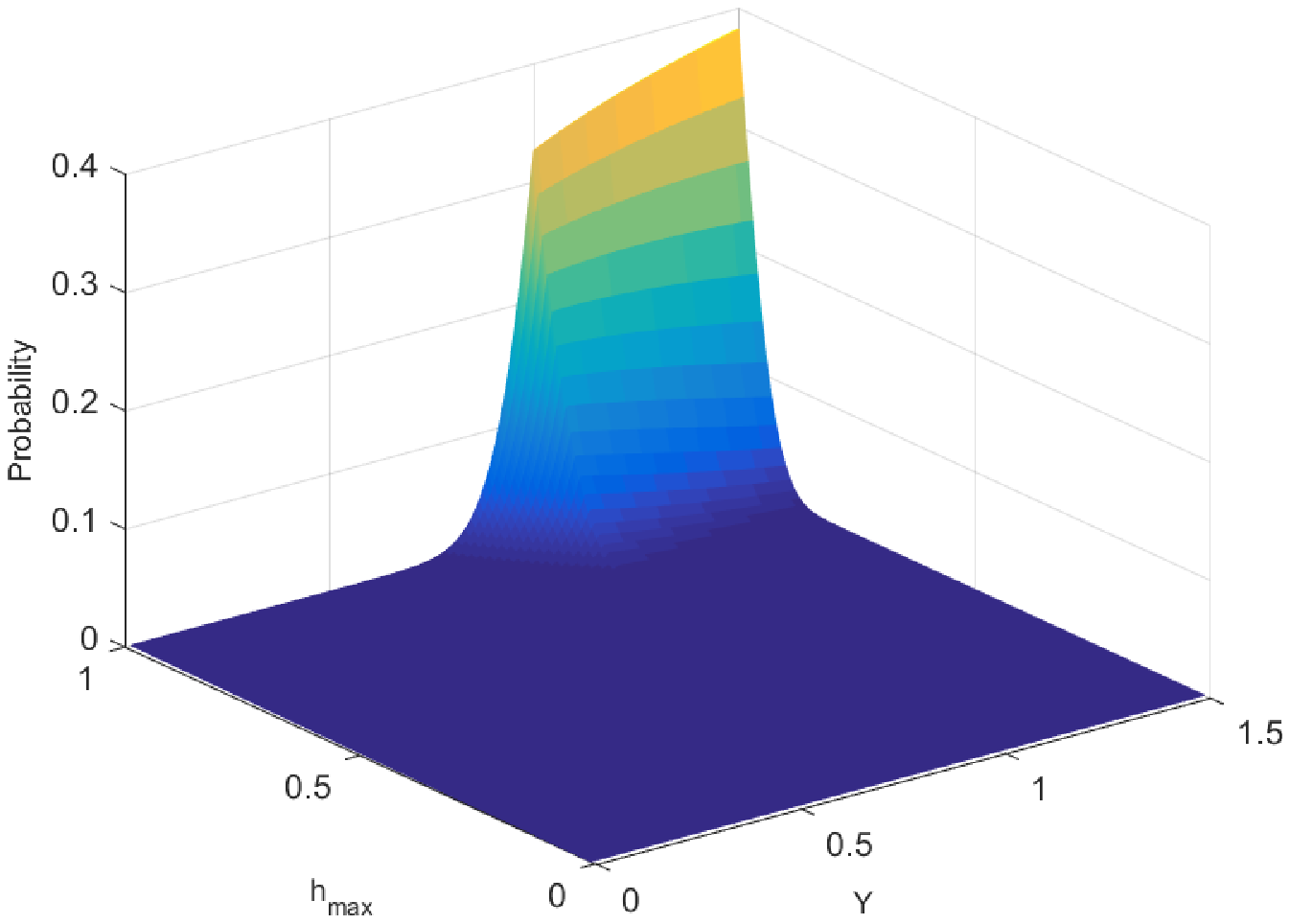}} & \scalebox{0.45}{\includegraphics{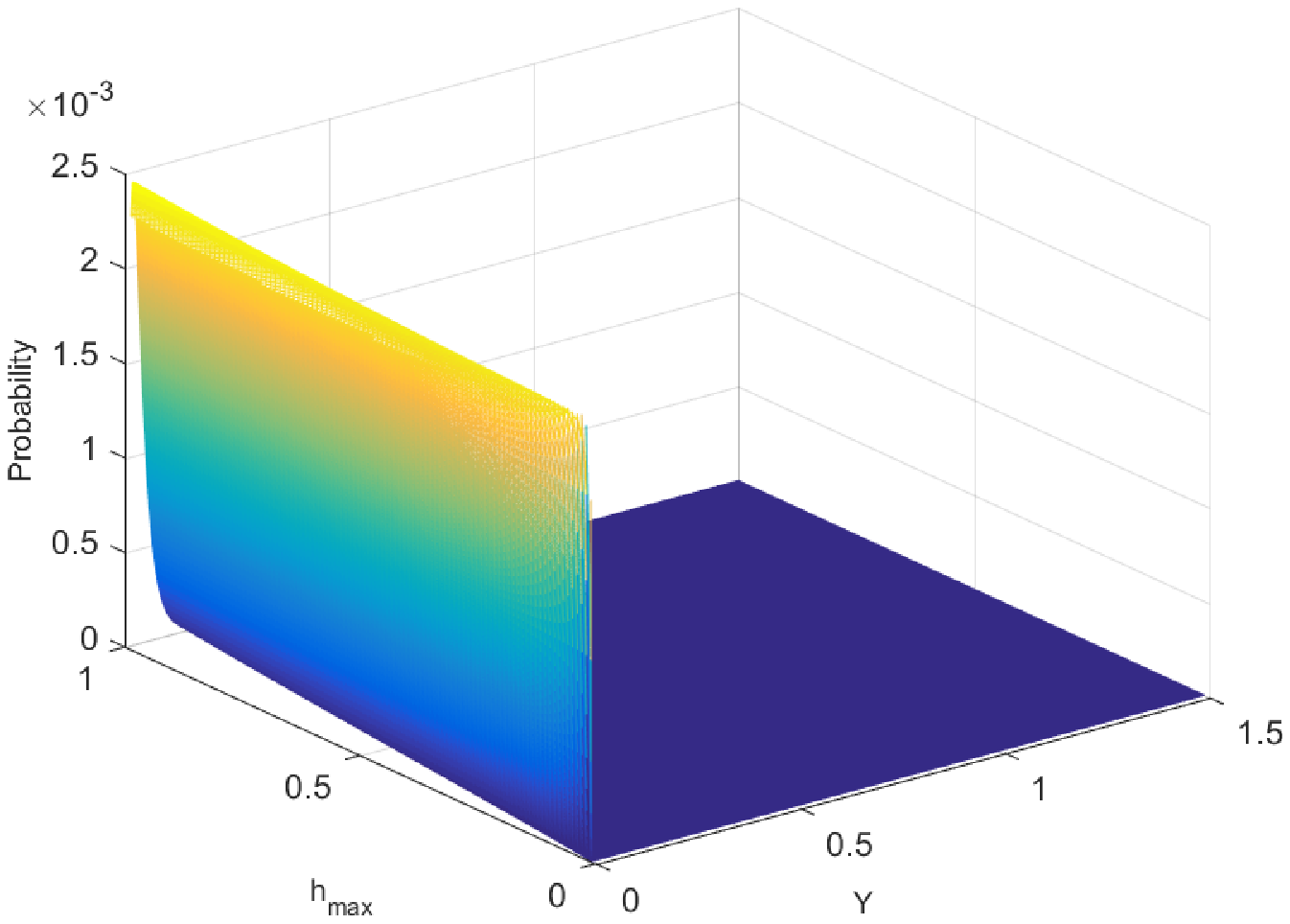}}\\
\mbox{\bf\small (a): $\sigma=0.2$, $\lambda=0.05$, $\kappa=2$} & \mbox{\bf\small (b): $\sigma=0.2$, $\lambda=0.05$, $\kappa=1$}\\
\end{array}$
\end{center}
\caption{Surface plots of probabilities of solutions of \eqref{eq:equ8} with \eqref{eq:fg} and timestepping strategy \eqref{eq:equ12} becoming negative over a single step, for $h_{\max}\in[0.01,1]$ and $Y=y\in[h_{\min},1.5]$ representing the value of the solution before taking the step. We take $\rho=2^6$. In {\bf (a)} Assumption \ref{assum:fel2} holds. In {\bf (b)} Assumption \ref{assum:fel2} does not hold.} \label{fig:plotsOneStepProb}
\end{figure}

\subsubsection{Probability of positivity over a full trajectory}
If we require path-bounded strategies where $R<\infty$, it is possible to derive an upper limit on $h_{\max}$ that is sufficient, over the entire trajectory, to keep the probability of needing the backstop scheme to prevent a negative value below some arbitrarily small tolerance. Our analysis reworks and extends the approach taken in the proof of \cite[Theorem 4.3]{KRR}, using adaptive timestepping to handle unboundedness in the drift term. By allowing the use of the backstop to ensure a minimum timestep, along with an application of the Law of Total Probability, we can avoid fixing the random number of steps $N$.

\begin{theorem}\label{thm:pos}
Let $\{Y_n\}_{n=0}^{N}$ be a solution of \eqref{eq:equ15}, with initial value $Y_0>0$, evaluated on a random mesh $\{h_n\}_{n=1}^N$ satisfying the conditions of Definition \ref{def:defYn} with $R<\infty$. Suppose also that $Y_0\in(0,R)$. Then, for each $\varepsilon\in(0,1)$ there exists $\bar h_{\max}(\varepsilon)>0$ such that, for all $h_{\max}\in(0,\bar h_{\max}(\varepsilon))$

 \[
 \mathbb{P}[\mathcal{R}_{N}]>1-\varepsilon,
 \]
where $\mathcal{R}_N:=\bigcap_{j=0}^{N}\{Y_j>0\}$.
\end{theorem}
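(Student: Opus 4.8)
The plan is to bound $\mathbb{P}[\mathcal{R}_N^c]$, the probability that at some step $j\in\{1,\dots,N\}$ the explicitly computed value $Y_j:=\theta(\bar Y_{j-1},\Delta W_j,h_j)$ from \eqref{eq:equ14} fails to be positive (thereby forcing the positivity branch of \eqref{eq:equ15}), and to show this vanishes as $\hmax\to0^+$. First I would reduce to a finite union over step indices. Since $N\le N_{\max}=\lceil T/\hmin\rceil=\lceil\rho T/\hmax\rceil$ a.s. and $\{j\le N\}=\{N\ge j\}\in\mathcal{F}_{t_{j-1}}$ (because $N$ is an $\mathcal{F}_{t_n}$-stopping time, as in Remark~\ref{rem:N} and the proof of Theorem~\ref{thm:scc}), the Law of Total Probability and a union bound give
\[
\mathbb{P}[\mathcal{R}_N^c]\le\sum_{j=1}^{N_{\max}}\mathbb{E}\!\left[\mathcal{I}_{\{j\le N\}}\,\mathbb{P}\!\left[Y_j\le 0\mid\mathcal{F}_{t_{j-1}}\right]\right].
\]

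Next I would estimate the one-step failure probability. Conditionally on $\mathcal{F}_{t_{j-1}}$, the increment $\Delta W_j$ is $N(0,h_j)$-distributed by Doob's optional sampling theorem (exactly as used to obtain \eqref{eq:equ9}), while $\bar Y_{j-1}$ and $h_j$ are $\mathcal{F}_{t_{j-1}}$-measurable, so
\[
\mathbb{P}\!\left[Y_j\le 0\mid\mathcal{F}_{t_{j-1}}\right]=\Phi(a_j),\qquad a_j:=\frac{1}{\gamma\sqrt{h_j}}\Bigl(-\bar Y_{j-1}-h_j\bigl(\tfrac{\alpha}{\bar Y_{j-1}}+\beta\bar Y_{j-1}\bigr)\Bigr).
\]
On $\{j\le N\}$, at a step governed by the adaptive rule, the path-bounded property supplies $Q\le\bar Y_{j-1}<R$ with $Q>0$ (e.g. $Q=\rho^{-1/r}$, $R=\rho^{1/r}$ for the strategy \eqref{def:PBpos}). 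Since $\alpha>0$ under Feller's condition and $\beta<0$, the bracketed numerator is at most $-Q+h_j|\beta|R$; hence, provided $\hmax\le Q/(2|\beta|R)$ and using $h_j\le\hmax$, one gets the uniform bound $a_j\le-Q/(2\gamma\sqrt{\hmax})$. Pulling this out of the sum and using $\sum_{j=1}^{N_{\max}}\mathbb{E}[\mathcal{I}_{\{j\le N\}}]=\mathbb{E}[N]\le N_{\max}\le \rho T/\hmax+1$ yields
\[
\mathbb{P}[\mathcal{R}_N^c]\le\left(\frac{\rho T}{\hmax}+1\right)\Phi\!\left(-\frac{Q}{2\gamma\sqrt{\hmax}}\right).
\]

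To finish, I would invoke the Gaussian tail estimate $\Phi(-x)\le(x\sqrt{2\pi})^{-1}e^{-x^2/2}$ for $x>0$ with $x=Q/(2\gamma\sqrt{\hmax})$, turning the right-hand side into $\tfrac{2\gamma}{Q\sqrt{2\pi}}(\rho T+\hmax)\,\hmax^{-1/2}e^{-Q^2/(8\gamma^2\hmax)}$. The exponential factor dominates the algebraic prefactor, so this expression tends to $0$ as $\hmax\to0^+$; consequently, for each $\varepsilon\in(0,1)$ one may take $\bar h_{\max}(\varepsilon)$ to be the minimum of $Q/(2|\beta|R)$ and the largest threshold below which this explicit bound stays strictly under $\varepsilon$ — a quantity computable in practice from the tolerance. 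For $\hmax\in(0,\bar h_{\max}(\varepsilon))$ this gives $\mathbb{P}[\mathcal{R}_N]=1-\mathbb{P}[\mathcal{R}_N^c]>1-\varepsilon$.

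The hard part will be the first step: handling the \emph{random} number of timesteps $N$. One has to justify replacing $N$ by the finite range $\{N_{\min},\dots,N_{\max}\}$, verify the $\mathcal{F}_{t_{j-1}}$-measurability of $\{N\ge j\}$, and correctly interchange the (conditional) expectations, sums and indicators — essentially the bookkeeping already carried out in the proof of Theorem~\ref{thm:scc}. A secondary subtlety is ensuring the path-bound $\bar Y_{j-1}\in(Q,R)$ with $Q>0$ really is in force on the steps that matter (those taken by the adaptive rule; steps that already invoke the backstop do not contribute to $\mathcal{R}_N^c$). By comparison the analytic estimate — Gaussian tail against polynomial growth in $1/\hmax$ — is routine.
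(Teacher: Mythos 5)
Your proposal is correct, and it reaches the theorem by a genuinely different route than the paper.

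The paper establishes $\mathbb{P}[\mathcal{R}_N]>1-\varepsilon$ \emph{multiplicatively}: it first partitions the sample space by $\Omega_i=\{N=i\}$ for $i=N_{\min},\dots,N_{\max}$, works on each $\Omega_i$ with the chain rule $\mathbb{P}[\mathcal{R}_i(i)]=\prod_n\mathbb{P}[\mathcal{R}_n(i)\mid\mathcal{R}_{n-1}(i),\dots]$, lower-bounds each conditional factor uniformly by a value $\Phi(\cdot)$ after bounding the argument using $Q\le Y_n<R$, raises that uniform factor to the power $N_{\max}$, and then recombines the $\Omega_i$-pieces via the Law of Total Probability. The final step uses the Sasvari--Chen two-sided bound on the Gaussian CDF to obtain the closed-form threshold criterion $g(h)$ that appears in \eqref{eq:g} and feeds the practical values in Table \ref{tab:barh}. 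Your argument instead bounds $\mathbb{P}[\mathcal{R}_N^c]$ \emph{additively} via a union bound over step indices $j=1,\dots,N_{\max}$, dispatches the random step count $N$ by inserting the indicator $\mathcal{I}_{\{j\le N\}}$ (whose $\mathcal{F}_{t_{j-1}}$-measurability is the same stopping-time fact used in the proof of Theorem \ref{thm:scc}) rather than by partitioning on $\{N=i\}$, and closes with the standard Gaussian tail estimate $\Phi(-x)\le(x\sqrt{2\pi})^{-1}e^{-x^2/2}$. Both arguments yield that the failure probability is a polynomial in $h_{\max}^{-1}$ times an exponentially small factor in $1/h_{\max}$, hence vanishes. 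What each buys: your route is shorter and avoids the slightly awkward conditioning on the future-measurable set $\Omega_i$, and the $Q/(2|\beta|R)$ preliminary smallness condition on $h_{\max}$ is a clean device for forcing the explicit one-step argument uniformly negative (at the mild cost of discarding the favourable $-h_j\alpha/\bar Y_{j-1}$ drift term); the paper's route produces a marginally tighter estimate, since $\Phi(a)^{N_{\max}}\ge 1-N_{\max}\Phi(-a)$, and gives the explicit function $g(h)$ that the authors then solve numerically in Example \ref{ex:pos}. You also correctly flag the key subtlety both proofs must navigate: the path-bound $Q\le\bar Y_{j-1}<R$ is only in force at steps actually governed by the adaptive rule (i.e.\ $h_j>h_{\min}$), while steps with $h_j=h_{\min}$ are handled by the backstop and therefore contribute nothing to $\mathcal{R}_N^c$; the paper handles this by an explicit disjoint decomposition of $\{Y_{n+1}>0\}$, you by restricting the estimate to the steps where the explicit map is applied.
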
 
\begin{proof}

Since, by \eqref{eq:bspos}, the backstop method will ensure positivity over a single step if $h_{n+1}=h_{\min}$, the event $\{Y_{n+1}>0\}$ is equivalent to the following union:

\begin{eqnarray*}
\left\{\left\{\frac{\triangle W_{n+1}}{\sqrt{h_{n+1}}}>-\frac{1}{\gamma}\left(\frac{Y_n}{h_{n+1}}+\frac{\alpha\sqrt{h_{n+1}}}{Y_n}+\beta\sqrt{h_{n+1}}Y_n\right)\right\}\cap\{h_{n+1}>h_{\min}\}\right\}\\
\cup\left\{h_{n+1}=h_{\min}\right\}.
\end{eqnarray*}
Moreover, when $Y_n>0$ satisfies \eqref{eq:defYn} in Definition \ref{def:defYn},
\begin{multline}\label{eq:smallEv}
\left\{u\in-\frac{1}{\gamma}\left(\frac{Q}{h_{\max}}+\frac{\alpha\sqrt{h_{\min}}}{R}+\beta\sqrt{h_{\max}}R\right)\right\}\\
\subseteq\left\{ u\in-\frac{1}{\gamma}\left(\frac{Y_n}{h_{n+1}}+\frac{\alpha\sqrt{h_{n+1}}}{Y_n}+\beta\sqrt{h_{n+1}}Y_n\right)\right\}.
\end{multline}

For each $i=N_{\min},\ldots N_{\max}$, define  $\Omega_i:=\{\omega\in\Omega\,:\,N(\omega)=i\}$, so that $\{\Omega\}_{i=N_{\min}}^{N_{\max}}$ is a finite partition of the sample space $\Omega$. On each $\Omega_i$ define the sequence of sub-events
\[
\mathcal{R}_n(i)=\{Y_n>0,Y_{n-1}>0,\ldots,Y_1>0,Y_0>0\}\cap\Omega_i,\quad n=0,1,\ldots i.
\]

Recall that the random variable $\triangle W_{n+1}/\sqrt{h_{n+1}}$ is distributed conditionally upon $\mathcal{F}_{t_n}$ like a standard Normal random variable. Moreover, $\mathcal{R}_n(i)\in\mathcal{F}_{t_n}$ for $n=0,\ldots,i$ and $i=N_{\min},\ldots,N_{\max}$. Let $\Phi$ denote the distribution function of a standard Normal random variable, and suppose $\{\xi_{n}\}_{n\in\mathbb{N}}$ is a sequence of mutually independent standard normal random variables. 
Then
\begin{eqnarray*}
\lefteqn{\mathbb{P}[Y_{n+1}>0|\mathcal{R}_{n}(i)]} \nonumber \\
&=&\frac{\mathbb{P}[\{Y_{n+1}>0\}\cap\{\mathcal{R}_n(i)\}]}{\mathbb{P}[\mathcal{R}_n(i)]}=\frac{\expect{\expect{I_{\{Y_{n+1}>0\}\cap\{\mathcal{R}_n(i)\}}|\mathcal{F}_{t_n}}}}{\mathbb{P}[\mathcal{R}_{n}(i)]}\\
&\geq&\expect{\expect{I_{\{Y_{n+1}>0\}\cap\{\mathcal{R}_n(i)\}}|\mathcal{F}_{t_n}}}=\expect{\mathbb{P}[\{Y_{n+1}>0\}\cap\{\mathcal{R}_n(i)\}|\mathcal{F}_{t_n}]}\\
&\geq&\expect{\mathbb{P}[h_{n+1}=h_{\min}|\mathcal{F}_{t_n}]}\\
&&\qquad  +\expect{\mathbb{P}\left[\frac{\triangle W_{n+1}}{\sqrt{h_{n+1}}}>-\frac{1}{\gamma}\left(\frac{Q}{h_{\max}}+\frac{\alpha\sqrt{h_{\min}}}{R}+\beta\sqrt{h_{\max}}R\right)\bigg|\mathcal{F}_{t_n}\right]}\\
&\geq&\expect{\mathbb{P}\left[\frac{\triangle W_{n+1}}{\sqrt{h_{n+1}}}>-\frac{1}{\gamma}\left(\frac{Q}{h_{\max}}+\frac{\alpha\sqrt{h_{\min}}}{R}+\beta\sqrt{h_{\max}}R\right)\bigg|\mathcal{F}_{t_n}\right]}\\
&=&\mathbb{P}\left[\xi_{n+1}>-\frac{1}{\gamma}\left(\frac{Q}{h_{\max}}+\frac{\alpha\sqrt{h_{\max}}}{R\sqrt{\rho}}+\beta\sqrt{h_{\max}}R\right)\right]\\
&=&1-\Phi\left(-\frac{1}{\gamma}\left(\frac{Q}{h_{\max}}+\sqrt{h_{\max}}\left(\frac{\alpha}{R\sqrt{\rho}}+\beta R\right)\right)\right)\\
&=&\Phi\left(\frac{1}{\gamma}\left(\frac{Q}{h_{\max}}+\sqrt{h_{\max}}\left(\frac{\alpha}{R\sqrt{\rho}}+\beta R\right)\right)\right),\quad n=0,\ldots,i-1.
\end{eqnarray*}

Since $Y_0>0$ we have $\mathbb{P}[\mathcal{R}_0(i)|\Omega_i]=1$ and therefore, since $\mathcal{R}_n(i)\subseteq\Omega_i$, $\Phi$ takes values on $[0,1]$, and $i\leq N_{\max}$,
\begin{eqnarray*}
\mathbb{P}[\mathcal{R}_i(i)|\Omega_i]\geq\mathbb{P}[\mathcal{R}_i(i)]&=&\mathbb{P}\left[\bigcap_{n=0}^{i}\mathcal{R}_n(i)\right]\\&=&\prod_{n=1}^{i}\mathbb{P}[\mathcal{R}_n(i)|\mathcal{R}_{n-1}(i),\ldots,\mathcal{R}_0(i)]\\
&=&\prod_{n=0}^{i-1}\mathbb{P}[Y_{n+1}>0|\mathcal{R}_n(i)]\\
&\geq&\Phi\left(\frac{1}{\gamma}\left(\frac{Q}{h_{\max}}+\sqrt{h_{\max}}\left(\frac{\alpha}{R\sqrt{\rho}}+\beta R\right)\right)\right)^{i}\\
&\geq&\Phi\left(\frac{1}{\gamma}\left(\frac{Q}{h_{\max}}+\sqrt{h_{\max}}\left(\frac{\alpha}{R\sqrt{\rho}}+\beta R\right)\right)\right)^{N_{\max}},
\end{eqnarray*}
for $i=N_{\min},\ldots,N_{\max}$. Multiplying through by $\mathbb{P}[\Omega_i]$ and applying the Law of Total Probability by summing both sides over $i=N_{\min},\ldots,N_{\max}$ gives
\begin{eqnarray*}
\mathbb{P}[\mathcal{R}_N]&=&\sum_{i=N_{\min}}^{N_{\max}}\mathbb{P}[\mathcal{R}_i(i)|\Omega_i]\mathbb{P}[\Omega_i]\\
&\geq&\sum_{i=N_{\min}}^{N_{\max}}\Phi\left(\frac{1}{\gamma}\left(\frac{Q}{h_{\max}}+\sqrt{h_{\max}}\left(\frac{\alpha}{R\sqrt{\rho}}+\beta R\right)\right)\right)^{N_{\max}}\mathbb{P}[\Omega_i]\\
&=&\Phi\left(\frac{1}{\gamma}\left(\frac{Q}{h_{\max}}+\sqrt{h_{\max}}\left(\frac{\alpha}{R\sqrt{\rho}}+\beta R\right)\right)\right)^{N_{\max}}.
\end{eqnarray*}
Fix $\varepsilon\in(0,1)$, then for all $h_{\max}\in(0,\bar h_{\max}(\varepsilon))$, we have
\begin{equation}\label{eq:increaseT}
\Phi\left(\frac{1}{\gamma}\left(\frac{Q}{h_{\max}}+\sqrt{h_{\max}}\left(\frac{\alpha}{R\sqrt{\rho}}+\beta R\right)\right)\right)^{N_{\max}}\geq 1-\varepsilon.
\end{equation}
To \eqref{eq:increaseT}, apply the following inequality due to \cite{SasChen}
\begin{equation*}
\frac{1}{\sqrt{2\pi}}\int_{-x}^{x}e^{-s^2/2}ds>\sqrt{1-e^{-x^2/2}},\quad x\in\mathbb{R^+},
\end{equation*}
along with the fact that $N_{\max}=\rho T/h_{\max}$, leading us to seek $h_{\max}$ so that
\begin{equation*}
\left(\frac{1}{2}+\frac{1}{2}\sqrt{1-\exp\left(-\frac{\left(\frac{Q}{h_{\max}}+\sqrt{h_{\max}}\left(\frac{\alpha}{R\sqrt{\rho}}+\beta R\right)\right)^2}{2\gamma^2}\right)}\right)^{\frac{\rho T}{h_{\max}}}\geq 1-\varepsilon.
\end{equation*}
Thus we derive the bound
\begin{multline*}
\tilde h_{\max}(\varepsilon)
:=  \sup { \Bigg\{ \bar h\in\left(0,1\right)}:\\
 \frac{Q}{h}+\sqrt{h}\left(\frac{\alpha}{R\sqrt{\rho}}+\beta R\right)\geq\sqrt{\ln\left(1-(2(1-\varepsilon)^{\frac{h}{\rho T}}-1)^2\right)^{-2\gamma^2}},\,h\in(0,\bar h) \Bigg\}.
\end{multline*}
$\bar h_{\max}(\varepsilon)$ is uniquely defined for each $\varepsilon\in(0,1)$ because 
\begin{equation}\label{eq:g}
g(h):=\frac{Q}{h}+\sqrt{h}\left(\frac{\alpha}{R\sqrt{\rho}}+\beta R\right)-\sqrt{-2\gamma^2\ln\left(1-(2(1-\varepsilon)^{h/\rho T}-1)^2\right)}
\end{equation}
is continuous on $\mathbb{R}^+$ with $\lim_{h\to 0^+}g(h)=\infty$, and therefore there is a neighbourhood of zero corresponding to $(0,\bar h_{\max}(\varepsilon))$ within which $g$ is positive.
\end{proof}
Note that if we extend the interval of simulation $[0,T]$ and keep $\varepsilon\in(0,1)$ fixed, there will be a corresponding increase in $N_{\max}$ in \eqref{eq:increaseT}. This will lead to a reduction in the bound $\bar h_{\max}(\varepsilon)$, in a way that is characterised by \eqref{eq:g}. More generally $g(h)$, as defined by \eqref{eq:g} in the proof of Theorem \ref{thm:pos}, provides a practical guide for choosing $h_{\max}$ in order to control the probability of invoking the backstop to avoid negative values.

\begin{example}\label{ex:pos}

Consider two adaptive timestepping strategies based on \eqref{def:PBpos} with $r=1$ and $\rho=2^6,2^8$, each used to simulate a single trajectory of \eqref{eq:equ3} over the interval $[0,1]$ using the adaptive method \eqref{eq:equ15}. 
In each case, we wish to choose $h_{\max}$ so that the probability of requiring the backstop in order to avoid negative values on that trajectory is less than $\varepsilon$, and this will hold for any $Y_0\in(h_{\min},R)$. 

Table \ref{tab:barh} shows the value of $\bar  h(\varepsilon)$ for a range of tolerances $\varepsilon$ for parameter sets where Assumption \ref{assum:fel2} is satisfied ($\kappa\lambda>2\sigma^2$), and where it is not ($\kappa\lambda<2\sigma^2$). The resulting bounds on $h_{\max}$ are determined by substituting all parameters into \eqref{eq:g} and solving $g(h)=0$ for $h$ using the {\tt fsolve} command in Maple with 20 digits of precision. We report the first 4 significant digits in each case, which is sufficient to illustrate the sensitivity of these bounds to the choice of $\rho$ and $\varepsilon$. 

\begin{table}
\def\arraystretch{1.1}
\begin{center}

$\rho=2^6$, $Q=0.015625$, $R=64$\\
\vspace{0.1cm}
\begin{tabular}{|c||c|c|}
\hline
& $\varepsilon$ & $\bar h_{\max}(\varepsilon)$\\
\hline
$\sigma=0.2$ & $10^{-2}$ & $3.594\times 10^{-3}$\\
$\lambda=0.05$ & $10^{-4}$ & $3.547\times 10^{-3}$\\
$\kappa=2$  & $10^{-6}$ & $3.506\times 10^{-3}$\\
\hline
\end{tabular}
\hspace{0.5cm}
\begin{tabular}{|c||c|c|}
\hline
& $\varepsilon$ & $\bar h_{\max}(\varepsilon)$\\
\hline
$\sigma=0.2$ & $10^{-2}$ & $5.454\times 10^{-3}$\\
$\lambda=0.05$ & $10^{-4}$ & $5.341\times 10^{-3}$\\
$\kappa=1$  & $10^{-6}$ & $5.246\times 10^{-3}$\\
\hline
\end{tabular}\\
\vspace{0.1cm}
$\rho=2^8$, $Q=0.00390625$, $R=256$\\
\vspace{0.1cm}
\begin{tabular}{|c||c|c|}
\hline
& $\varepsilon$ & $\bar h_{\max}(\varepsilon)$\\
\hline
$\sigma=0.2$ & $10^{-2}$ & $5.800\times 10^{-4}$\\
$\lambda=0.05$ & $10^{-4}$ & $5.755\times 10^{-4}$\\
$\kappa=2$  & $10^{-6}$ & $5.716\times 10^{-4}$\\
\hline
\end{tabular}
\hspace{0.5cm}
\begin{tabular}{|c||c|c|}
\hline
& $\varepsilon$ & $\bar h_{\max}(\varepsilon)$\\
\hline
$\sigma=0.2$ & $10^{-2}$ & $8.912\times 10^{-4}$\\
$\lambda=0.05$ & $10^{-4}$ & $8.804\times 10^{-4}$\\
$\kappa=1$  & $10^{-6}$ & $8.710\times 10^{-4}$\\
\hline
\end{tabular}
\end{center}
\caption{Bounds on $h_{\max}$ ensuring positivity (without the use of the backstop) of trajectories of \eqref{eq:equ16} with probability at least $1-\varepsilon$ where the path-bounded timestepping strategy satisfies  Definition \ref{def:defYn}. Assumption \ref{assum:fel2} is satisfied ($\kappa\lambda>2\sigma^2$) for tables in the left column, and violated ($\kappa\lambda<2\sigma^2$) for tables in the right column.
}\label{tab:barh}
\end{table}

\end{example}

\section{Numerical simulation}\label{sec:numerics}
Given \eqref{eq:equ1} and its associated transformation \eqref{eq:equ3}, we compare our hybrid adaptive method \eqref{eq:equ15}, referred to in this section as {\tt Explicit Adaptive (EA)}, to a natural semi-implicit variant constructed by replacing the update equation \eqref{eq:equ14} with 
\begin{equation}\label{eq:SIA}
Y_{n+1}=(1-\beta h_{n+1})^{-1}\left[Y_n+h_{n+1}\frac{\alpha}{Y_n}+\gamma\triangle W_{n+1}\right],
\end{equation}
referred to in this section as {\tt Semi-Implicit Adaptive (SIA)}. In both cases we will use the adaptive timestepping strategy given by \eqref{eq:equ12} with $r=1$ (note that we see similar results when $r=2$). We also compare to three fixed step methods: the explicit discretisation of \eqref{eq:equ1} analysed by \cite{9} given by
\[
X_{n+1}=X_n+h\kappa(\lambda-X_n)+\sigma\sqrt{|X_n|}\triangle W_{n+1},
\]
referred to in this section as {\tt Explicit Fixed (EF)}, 
the fully truncated method proposed by \cite{15} given by 
\[
\widetilde{X}_{n+1}=\widetilde{X}_n+h\kappa(\lambda-\widetilde{X}_n^+)+\sigma\sqrt{\widetilde{X}_n^+}\triangle W_{n+1};\quad X_{n+1}=\widetilde{X}_{n+1}^+;\quad \widetilde{X}_0=X_0,
\]
referred to in this section as {\tt Fully Truncated (FT)}, and the drift implicit square root discretisation of \eqref{eq:equ3} proposed and analysed in~\cite{Alphonsi2005,7,Alphonsi2013}, given by
\[
Y_{n+1}=\frac{Y_n+\gamma\triangle W_{n+1}}{2(1-\beta h)}+\sqrt{\frac{(Y_n+\gamma\triangle W_{n+1})^2}{4(1-\beta h)^2}+\frac{\alpha h}{1-\beta h}},
\]
and referred to in this section as {\tt Implicit Fixed (IF)}.

In the first part, we will compare the strong convergence of these methods in the mean stepsize, and the corresponding numerical efficiency. In the second part we explore the dependence on model parameters.  

\subsection{Strong convergence and efficiency}
Throughout the section, we take $\rho=2^6$. We solve using {\tt EA} and {\tt SIA} with values of $h_{\max}=2^{-i}$, $i=4,\ldots,9$ and $M$ sample trajectories to estimate $\sqrt{\expect{|X(T)-X_N|^2}}$, the root mean square error (RMSE), at a final time $T=1$. To compute error estimates we first generate a reference solution using {\tt IF} over a mesh with stepsize $h=2^{-25}$, using a Brownian bridge to ensure values for the adaptive approximations are on the reference trajectory. To ensure that we are comparing adaptive and fixed step schemes of similar average cost, when solving using {\tt IF}, {\tt EF}, and {\tt FT} we take as the fixed step $h_{\text{mean}}$ the
average of all timesteps $h_n^{(m)}$ taken by {\tt EA} over each path and each realisation $\omega_m$, $m=1,\ldots,M$ so that $$h_{\text{mean}}=\frac{1}{M} \sum_{m=1}^M \frac{1}{N^{(\omega_m)}}\sum_{n=1}^{N^{(\omega_m)}} h_{n}^{(\omega_m)}.$$

In Figure \ref{fig:Conv&CPUmean} we examine strong convergence for these methods by plotting RMSE against $h_{\text{mean}}$ with $M=1000$ on a log-log scale, and  efficiency by plotting RMSE against average compute time (cputime) again with $M=1000$. 

In Figure \ref{fig:Conv&CPUmean} (a), Assumption \ref{assum:fel2} holds. The estimated error at each value of $h_{\text{mean}}$ is comparable for all methods except {\tt FT}, and the numerical order appears to be close to one. \CH{For {\tt FT}, the estimated error at each value of $h_{\text{mean}}$ is higher, and the numerical order appears closer to 1/2.} In Figure \ref{fig:Conv&CPUmean} (b) we also see comparable efficiencies as measured by CPU time for this example, \CH{again with the exception of {\tt FT}}. In Figure \ref{fig:Conv&CPUmean} (c), Assumption \ref{assum:fel2} does not hold, and we see first that the numerical order of {\tt EF} has reduced, and the estimated error at each value of $h_{\text{mean}}$ is lowest for {\tt EA} and {\tt SIA}, which also demonstrate the fastest CPU times in Figure \ref{fig:Conv&CPUmean} (d) for lower RMSE values. 

\begin{figure}
\begin{center}
$\begin{array}{@{\hspace{-0.05in}}c@{\hspace{-0.25in}}c}
\scalebox{0.45}{\includegraphics{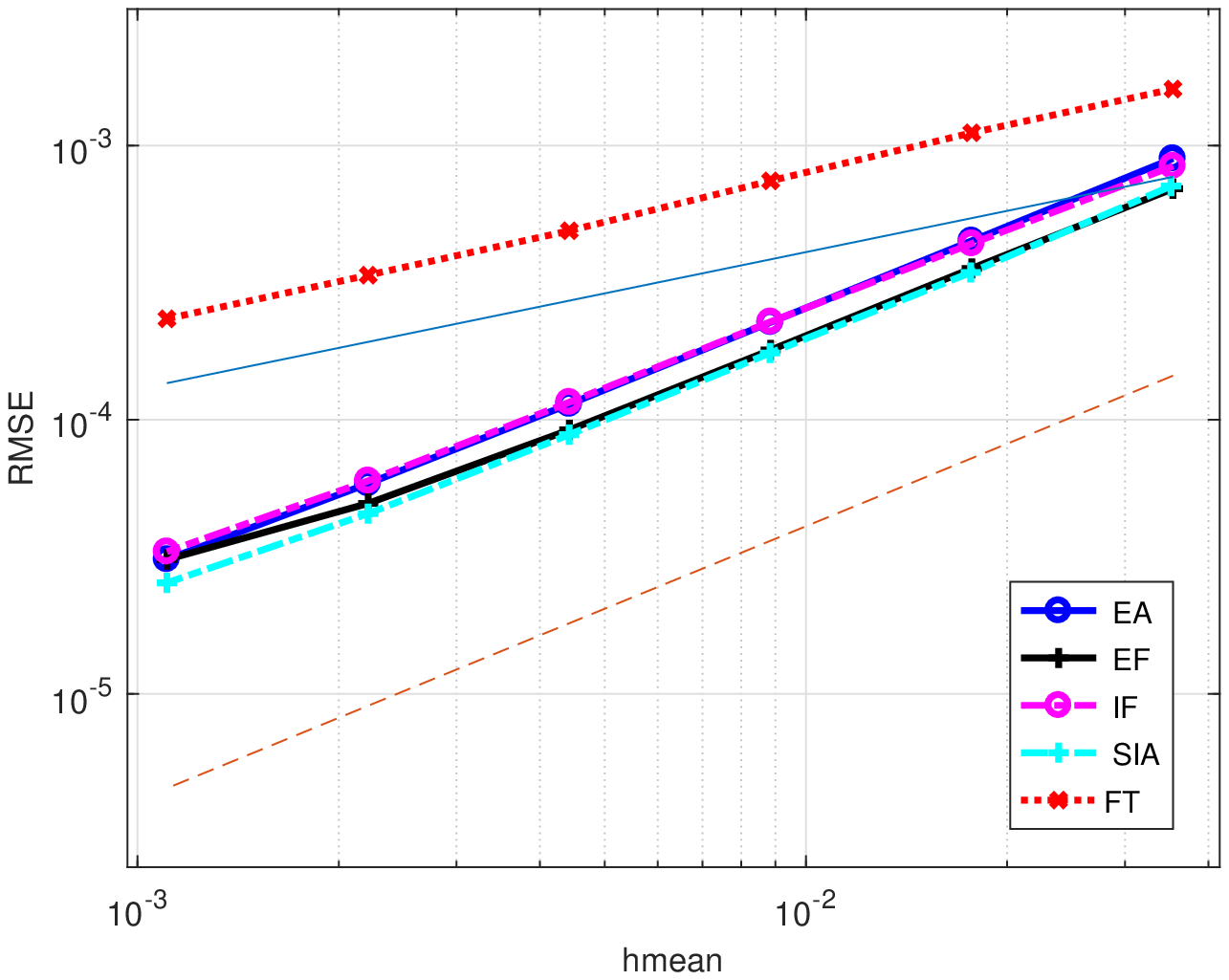}} & \scalebox{0.45}{\includegraphics{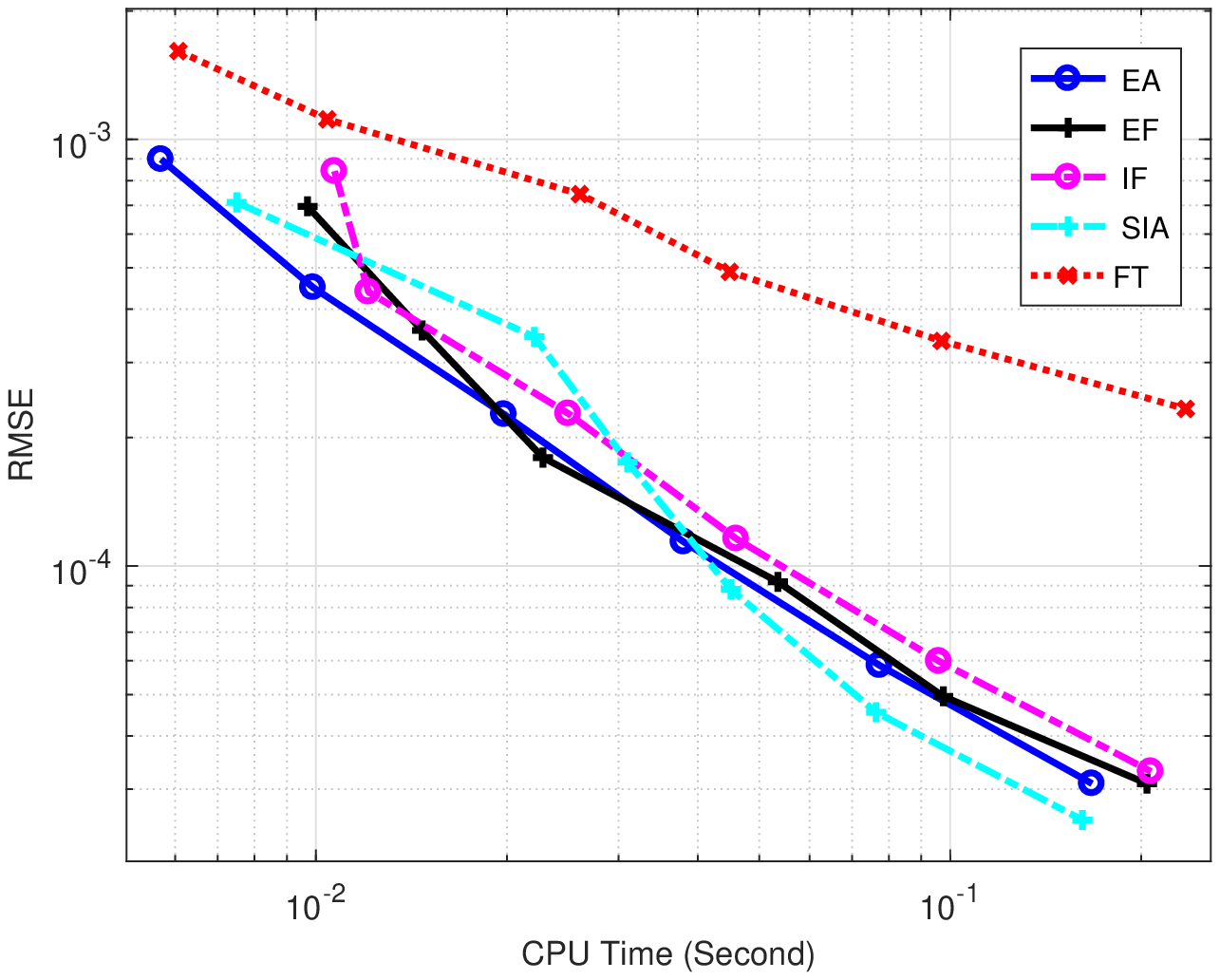}}\\
\mbox{\bf\small (a)} & \mbox{\bf\small (b)}\\
\scalebox{0.45}{\includegraphics{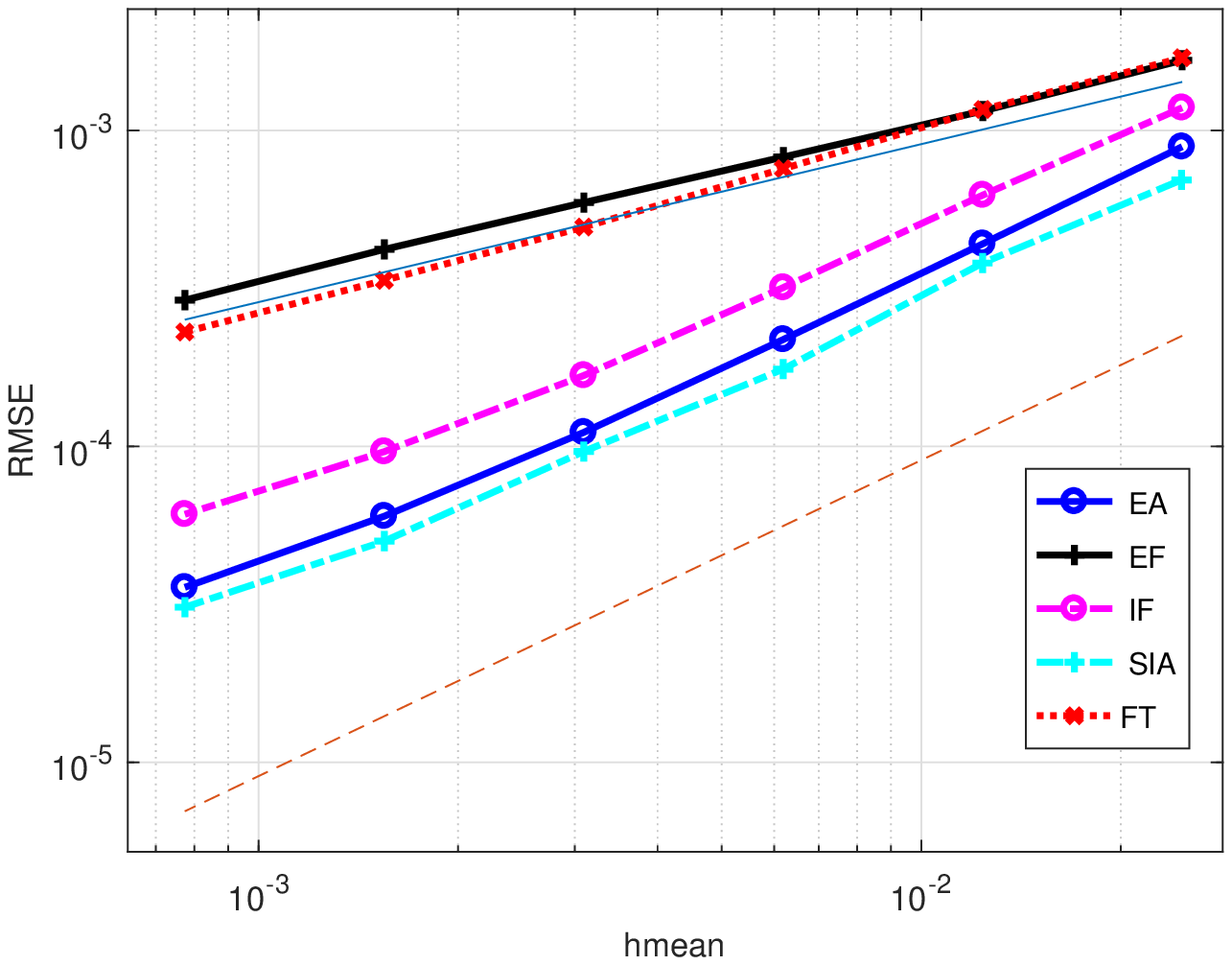}} & \scalebox{0.45}{\includegraphics{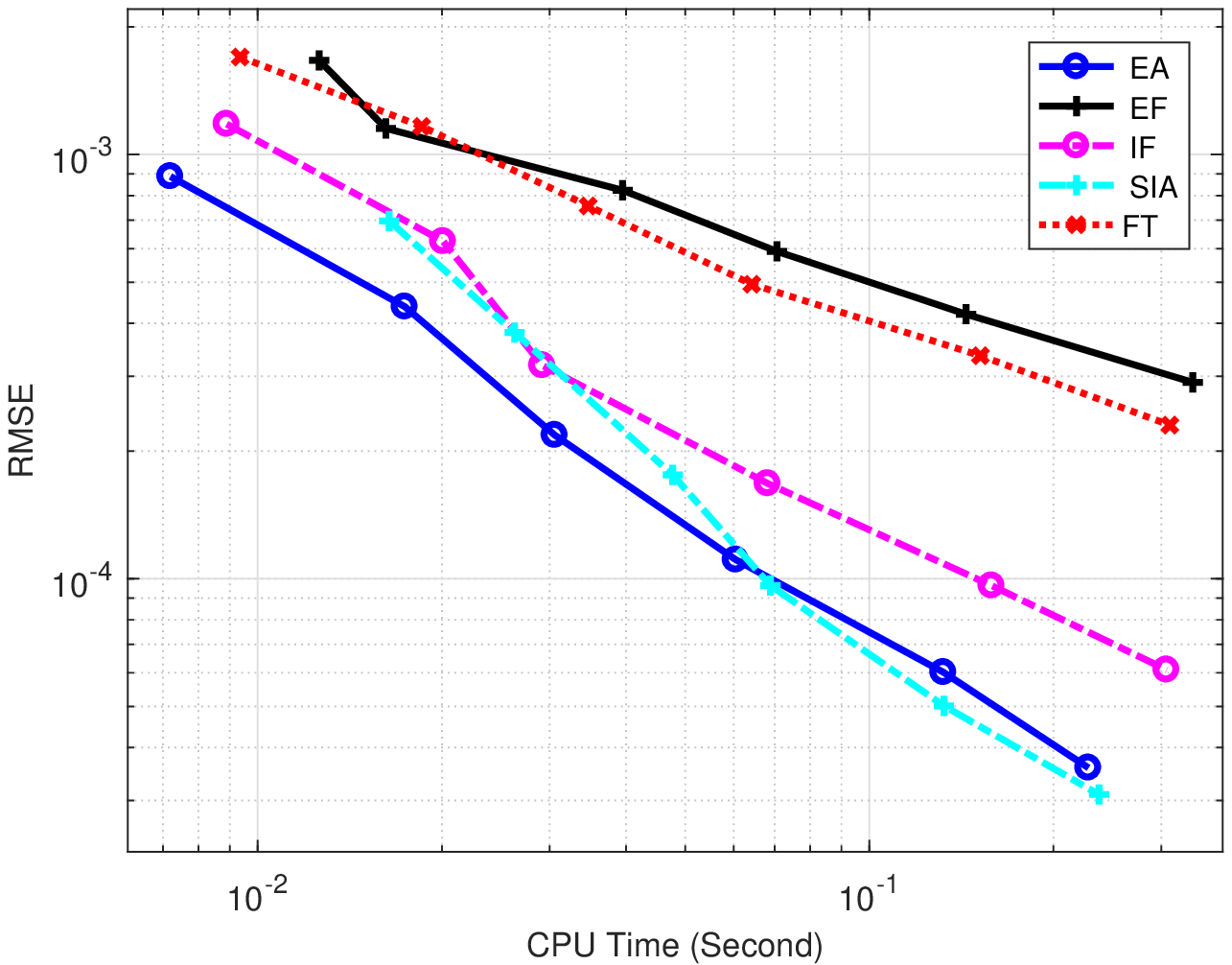}}\\
\mbox{\bf\small (c)} & \mbox{\bf\small (d)}\\
\end{array}$
\end{center}
\caption{Convergence and efficiency of methods applied to \eqref{eq:equ1} with $\lambda=0.05$, $\sigma=0.2$, $Y_0=0.02$. $\kappa=2$ in {\bf (a)} and {\bf (b)}, and $\kappa=1$ in {\bf (c)} and {\bf (d)}. Reference lines of slope $1/2$ and $1$ are provided.}
\label{fig:Conv&CPUmean}
\end{figure}

\subsection{Parameter dependence of the strong convergence rate}
Finally, we investigate numerically the dependence of the rate of strong convergence on the value of the parameter $a:=\sigma^2/(2\kappa\lambda)$. Note that Assumption \ref{assum:fel2} corresponds to $a<0.25$, and Feller's condition corresponds to $a\leq 1$. For $40$ uniformly spaced values of $a$ in the interval $[0.04,1.6]$, we numerically estimate the order of strong convergence in $L_2$ as the slope of the corresponding error over a range of values of $h_{\text{mean}}$ computed by generated strong convergence plots as in Figure \ref{fig:Conv&CPUmean} and using the {\tt polyfit} command in MATLAB to estimate the order of strong convergence for each method. We make the following caveat: for $a\in(1,1.6]$ all numerical schemes presented here are well-defined stochastic processes, even though the SDE \eqref{eq:equ3} is not well defined in that parameter regime. We present the numerically estimated error in terms of $X$ outside the Feller regime using the reference solution generated by {\tt IF}, though it is not known if {\tt IF} converges in that regime with nonzero rate.

In Figure \ref{fig:delta} we observe that {\tt EA} maintains the highest numerical order of convergence: at or close to one while Feller's condition holds. The reduction in order outside of this region, which is visible for all methods, occurs more sharply in Figure \ref{fig:delta} (b) when $\kappa$ is small. This is followed by {\tt SIA}, which maintains a numerical order of convergence close to {\tt EA} when Assumption \ref{assum:fel2} holds, but reduces more quickly outside this region. The difference is more pronounced in Figure \ref{fig:delta} (a), and this may be because updates using the {\tt SIA} method are subject to a damping factor $(1-\beta h_{n+1})^{-1}$, as can be seen in \eqref{eq:SIA}, which has greater effect for larger values of $\kappa$.  Finally, we observe for {\tt EF} and {\tt FT} an uptick in convergence rate as $a$ approaches zero, and this is consistent with the notion that it should display order one convergence in the absence of noise. 

\begin{figure}
\begin{center}
$\begin{array}{@{\hspace{-0.1in}}c@{\hspace{-0.2in}}c}
\scalebox{0.45}{\includegraphics{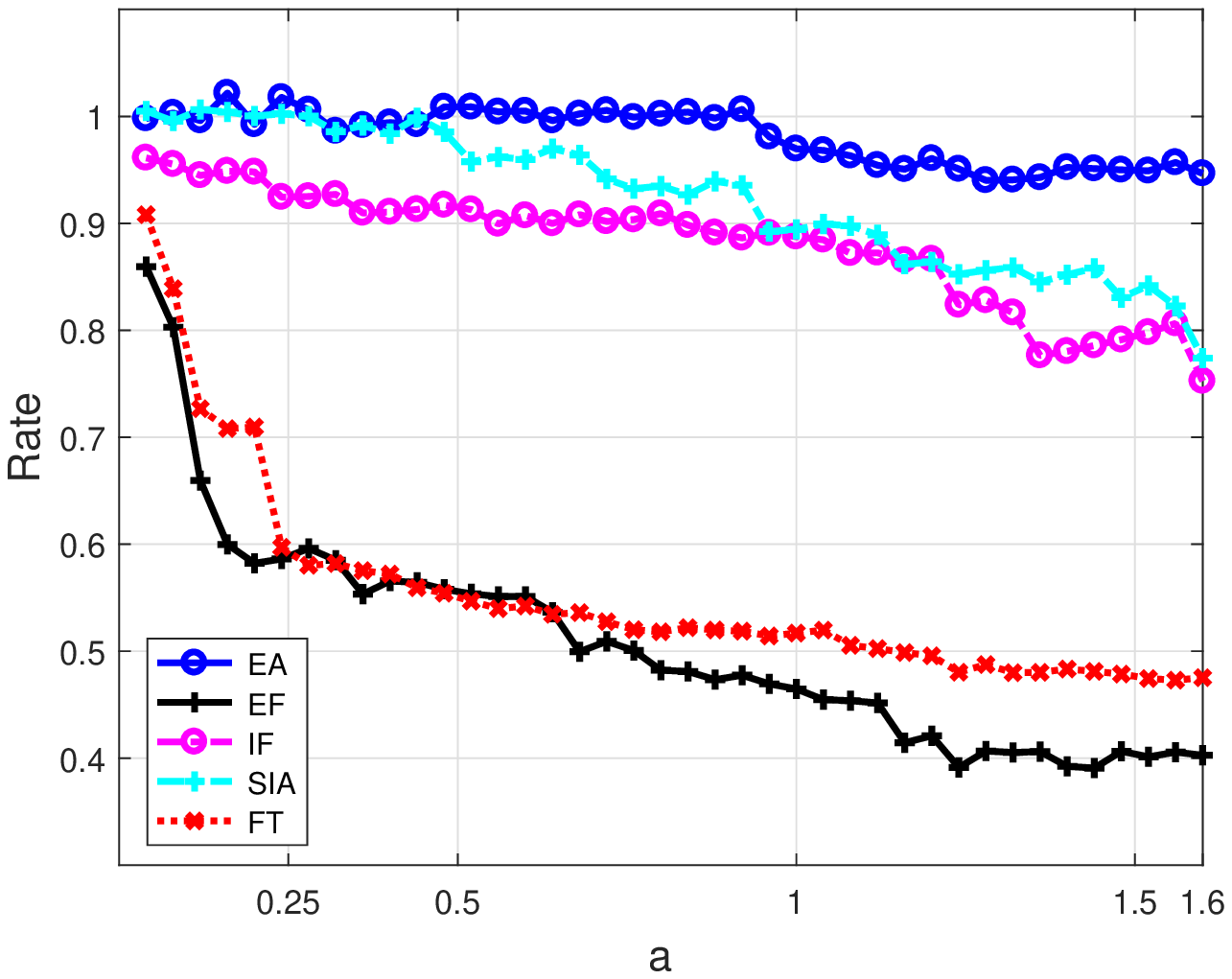}} & \scalebox{0.45}{\includegraphics{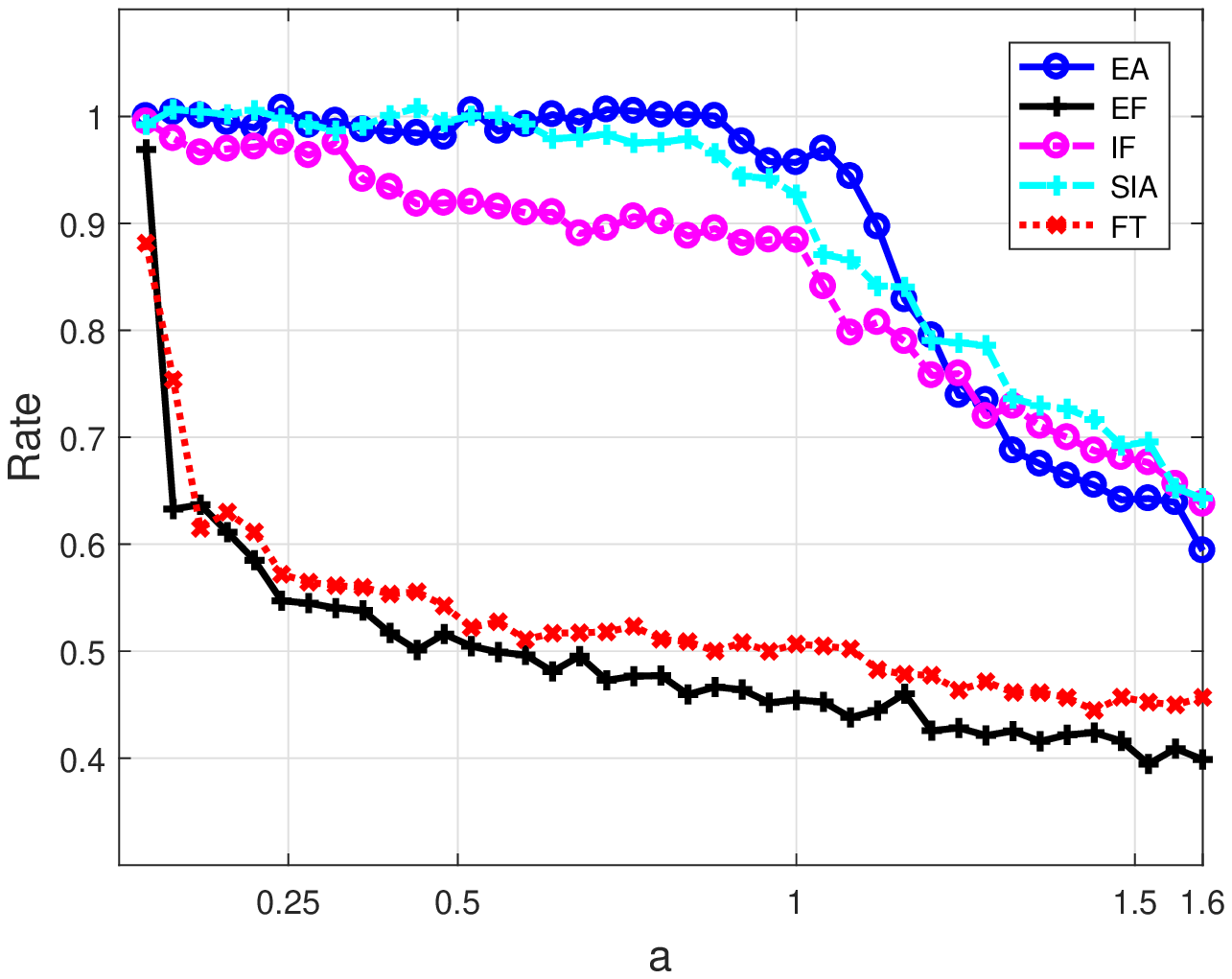}}\\
\mbox{\bf\small (a)} & \mbox{\bf\small (b)}
\end{array}$
\caption{Estimation of strong convergence order for methods applied to \eqref{eq:equ1} with $\lambda=0.05$, $Y_0=0.02$, $\kappa=2$ in {\bf (a)}, and $\kappa=0.2$ in {\bf (b)}. Here, $a=\sigma^2/(2\kappa\lambda)$. In both cases, Assumption \ref{assum:fel2} holds to the left of the vertical line at $a=0.25$, and Feller's condition holds to the left of the vertical line at $a=1$.}\label{fig:delta}
\end{center}
\end{figure}

Figure \ref{fig:BFTP} demonstrates how frequently the backstop was invoked in the production of Figure \ref{fig:delta}, where we separately track usage to ensure positivity and usage to bound below the stepsize at $h_{\min}$. We see that when Assumption \ref{assum:fel2} holds, we do not require the backstop to avoid negative values, though as we move to the boundary of that region we do start to use it to bound the stepsize at $h_{\min}$ for a small proportion of steps. Note also that usage to avoid negative values increases with $a$ when $\kappa=2$ only, whereas usage to bound the timestep increases with $a$ for both $\kappa=2,0.2$, more rapidly in the latter case. 

\begin{figure}
\begin{center}
$\begin{array}{@{\hspace{-0.1in}}c@{\hspace{-0.2in}}c}
\scalebox{0.45}{\includegraphics{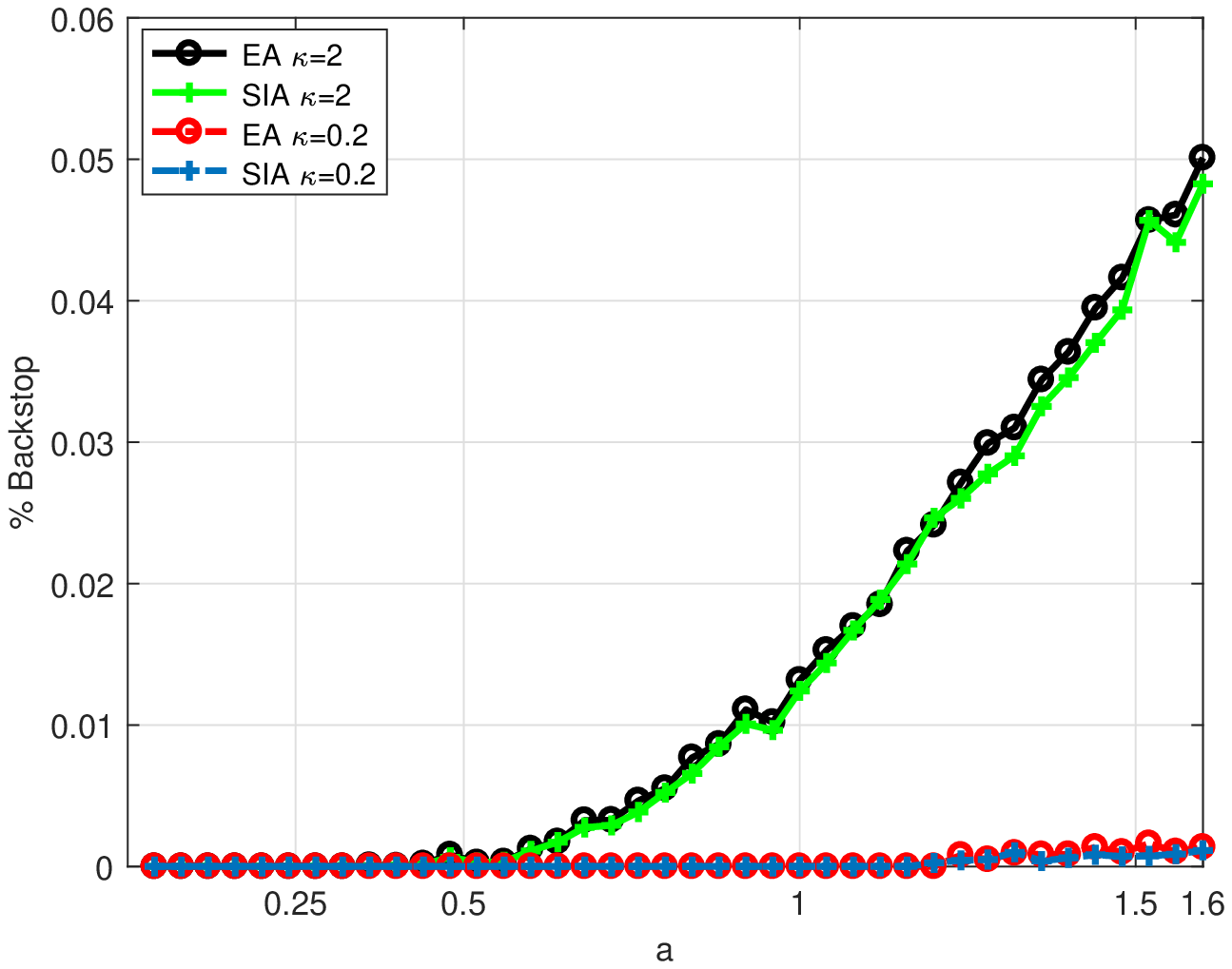}} & \scalebox{0.45}{\includegraphics{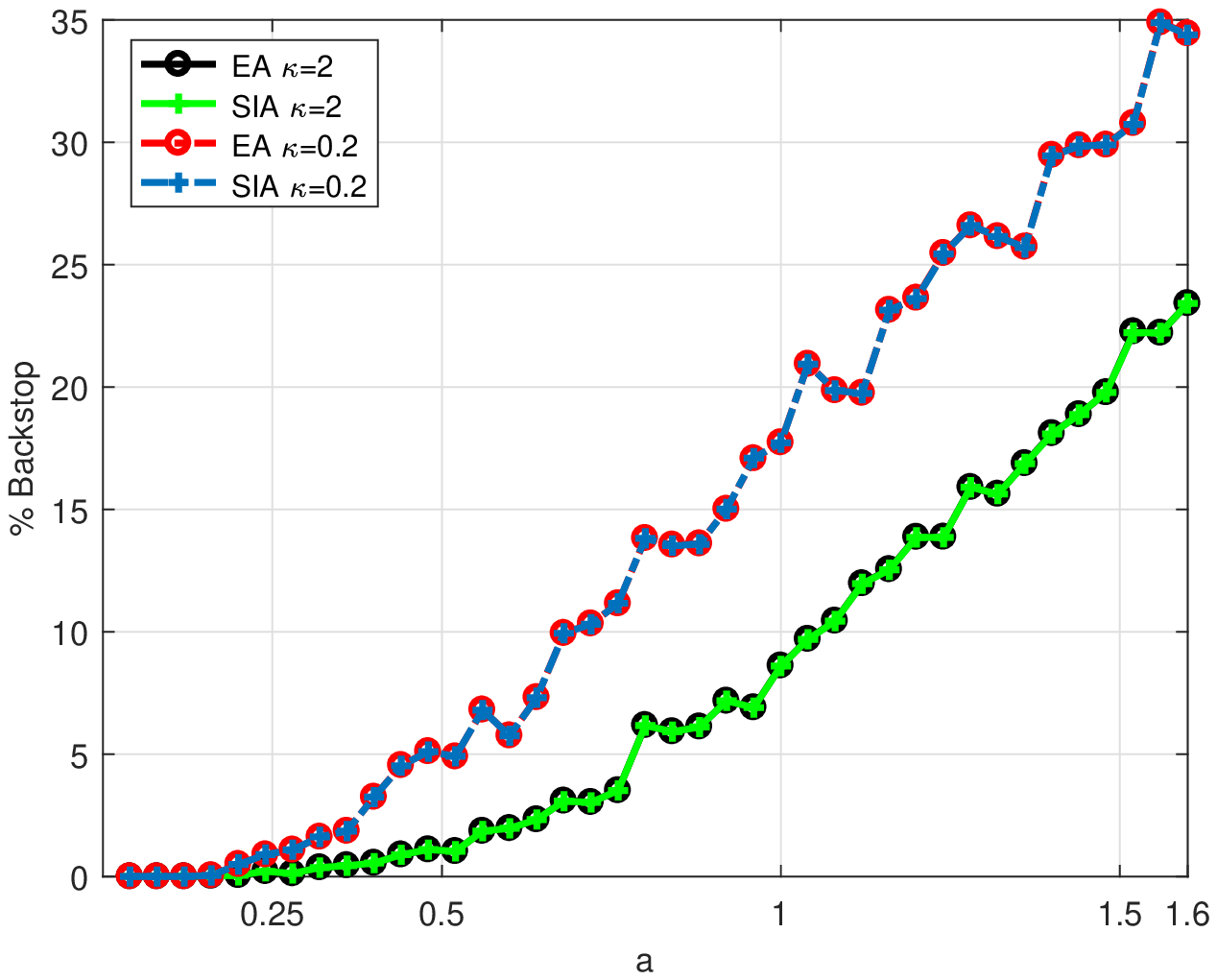}}\\
\mbox{\bf\small (a)}: \text{Avoid }Y_n<0 & \mbox{\bf\small (b)}: \text{Ensure }h_n=h_{\min}
\end{array}$
\caption{Percentage of times the backstop was invoked in the production of Figure \ref{fig:delta} in order to {\bf (a)} avoid a negative value or {\bf (b)} bound the timestep from below by $h_{\min}$.}\label{fig:BFTP}
\end{center}
\end{figure}

\section*{Acknowledgements}
The authors are grateful to Professor Alexandra Rodkina, of the University of the West Indies at Mona, Jamaica, and Ms Fandi Sun, of Heriot-Watt University, Edinburgh, UK, for useful discussion in preparing this work. The manuscript comprises the central part of the PhD project of the final author, Heru Maulana, who tragically passed away in December 2020 shortly after the submission of the manuscript, and we dedicate it to his memory.

\section*{Declaration of Interest}
The first and third authors were supported by a grant from \textit{Lembaga Pengelola Dana Pendidikan (LPDP) Republik Indonesia}.


\begin{thebibliography}{99}

\bibitem{Alphonsi2005} Alfonsi, A. (2005).
On the discretization schemes for the CIR (and Bessel squared) processes. \textit{Monte Carlo Methods and Applications}, {\bf 11}(4), pp. 355--384.

\bibitem{Alfonsi2010} Alfonsi, A. (2010).
High order discretization schemes for the CIR process: application to Affine Term Structure and Heston models. \textit{Mathematics of Computation, American Mathematical Society}, {\bf 79}(269), pp. 209--237.

\bibitem{Alphonsi2013} Alfonsi, A. (2013).
Strong order one convergence of a drift implicit Euler scheme: Application to the CIR process. \textit{Statistics and Probability Letters}, {\bf 83}(2), pp. 602--607.

\bibitem{1} Berkaoui, A., Bossy, M., and Diop, A. (2008).
Euler scheme for SDEs with non-Lipschitz diffusion coefficient: strong convergence. 
\textit{ESAIM Probability and Statistics}, {\bf 12}(1), pp. 1--11.

\bibitem{2} Bossy, M., and Diop, A. (2004).
An efficient discretisation scheme for one dimensional SDEs with a diffusion coefficient function of the form $|x|^{\alpha}, \ \alpha \ \in [1/2,1]$. 
\textit{INRIA working paper}, no. 5396.

\bibitem{3} Broadie, M., and Kaya, O. (2006).
Exact Simulation of Stochastic Volatility and Other Affine Jump Diffusion Processes. \textit{Operations Research}, {\bf 54}(2), pp. 217--231.

\bibitem{CJM2016} Chassagneux, J.-F., Jacquier, A., and Mihaylov, I. (2016), An explicit Euler scheme with strong rate of convergence for financial SDEs with non-Lipschitz coefficients. SIAM Journal on Financial Mathematics, {\bf 7}(1), pp. 993--1021.

\bibitem{5} Cox, J. C., Ingersoll, J. E., and Ross, S. A.  (1985).
A Theory of the Term Structure of Interest Rates. 
\textit{Econometrica}, {\bf 53}(2), pp. 385--407.

\bibitem{Cozma} Cozma, A., and Reisinger, C. (2020). Strong order 1/2 convergence of full  truncation Euler approximations to the Cox-Ingersoll-Ross process. \textit{IMA Journal of Numerical Analysis}, {\bf 40}(1), pp. 358--376.

\bibitem{6} Deelstra, G., and Delbaen, F. (1998).
Convergence of Discretized Stochastic (Interest Rate) Process with Stochastic Drift Term. 
\textit{Applied Stochastic Models and Data Analysis}, {\bf 14}(1), pp. 77--84.

\bibitem{7} Dereich, S., Neuenkirch, A., and Szpruch, L. (2012).
An Euler-Type method for the strong approximation of the Cox-Ingersoll-Ross process. 
\textit{Proceedings of The Royal Society A: Mathematical Physical and Engineering Sciences}, {\bf 468},  pp. 1105--1115.

\bibitem{Giles2015} Giles, M. B. (2015).
Multilevel Monte Carlo methods. \textit{Acta Numerica}, {\bf 24}, pp. 259--328.

\bibitem{8} Glassermann, P. (2008).
\textit{Monte Carlo Methods in Financial Engineering}. New York:
Springer.

\bibitem{9} Higham, D. J., and Mao, X. (2005).
Convergence of Monte Carlo simulations involving the mean reverting square root process. 
\textit{Journal of Computational Finance}, {\bf 8}(3), pp. 35--61.

\bibitem{10} Hutzenthaler, M., Jentzen, A., and Kloeden, P. E. (2012).
Strong Convergence of An Explicit Numerical Method for SDEs with Non-Globally Lipschitz Continuous Coefficients. 
\textit{Annals of Applied Probability}, {\bf 22}(4), pp. 1611--1641.

\bibitem{11} Jafari, M. A., and Abbasian, S. (2017).
The Moments for Solution of the Cox-Ingersoll-Ross Interest Rate Model.
\textit{Journal of Finance and Economics}, {\bf 5}(1), pp. 34--37.

\bibitem{13} Kelly, C., and Lord, G. J. (2022).
Adaptive Euler methods for stochastic systems with non-globally Lipschitz coefficients. 
\textit{Numerical Algorithms}, {\bf 89}, pp. 721--747.

\bibitem{12} Kelly, C., and Lord, G. J. (2018).
Adaptive timestepping strategies for nonlinear stochastic systems. 
\textit{IMA Journal of Numerical Analysis}, {\bf 38}(3), pp. 1523--1549.

\bibitem{KeLoSu2019} Kelly, C., Lord, G. J., and Sun, F. (2019).
Strong convergence of an adaptive time-stepping Milstein method for SDEs with one-sided Lipschitz drift. arXiv:1909.00099, 20 pages.

\bibitem{KRR} Kelly, C., Rodkina, A., and Rapoo, E. M. (2018).
Adaptive timestepping for pathwise stability and positivity of strongly discretised nonlinear stochastic differential equations.
\textit{Journal of Computational and Applied Mathematics}, {\bf 334}, pp. 39--57.

\bibitem{14} Kloeden, P. E., and Platen, E. (2011).
\textit{Numerical Solution of Stochastic Differential Equations}, 
Stochastic Modeling and Applied Probability. Berlin Heidelberg: Springer.


\bibitem{LM} Liu, W., and Mao, X. (2017).
Almost sure stability of the Euler-Maruyama method with random variable stepsize for stochastic differential equations. 
\textit{Numerical Algorithms}, {\bf 74}(2), pp. 573--592.


\bibitem{15} Lord, R., Koekkoek R., and Van Dijk, D. (2010).
A comparison of biased simulation schemes for stochastic volatility models. 
\textit{Quantitative Finance}, {\bf 10}(2), pp. 177--194.

\bibitem{17} Mao, X. (2007).
\textit{Stochastic Differential Equations and Applications (2nd Edition)}. Horwood Publishing, Chichester, UK.

\bibitem{SasChen} Sasvari, Z., and Chen, H. (1999).
Tight Bounds for the Normal Distribution: 10611.
\textit{The American Mathematical Monthly}, {\bf 106}(1), p. 76--76.

\bibitem{OrlandoEtAl2019} Orlando, G., Mininni, R., and Bufalo, M. (2019).
Interest rates calibration with a CIR model. 
\textit{Journal of Risk Finance}, {\bf 20}(4), pp. 370--387.

\bibitem{Shiryaev96} Shiryaev, A. N. (1996).
\textit{Probability (2nd edition)}. Berlin: Springer. 

\bibitem{18} Stuart, A. M., and Humphries, A. R. (1996).
\textit{Dynamical Systems and Numerical Analysis}. 
Cambridge University Press.

\bibitem{19} Zeytun, S., and Gupta, A. (2007).
\textit{A Comparative Study of the Vasicek and the CIR Model of the Short Rate}. 
Berichte des Fraunhofer ITWM.  Kaiserslautern, Germany.
\end{thebibliography}
\end{document}